\documentclass[a4paper,UKenglish,cleveref,autoref,thm-restate,final]{lipics-v2021}

\hideLIPIcs  

\title{A Zielonka-type Construction for Connectedly Communicating Processes} 

\author{Béatrice Bérard}{Sorbonne Université, CNRS, LIP6, F-75005 Paris, France}{beatrice.berard@lip6.fr}{}{}
\author{Benjamin Monmege}{Aix-Marseille Univ, CNRS, LIS, Marseille, France}{benjamin.monmege@univ-amu.fr}{}{}
\author{B Srivathsan}{Chennai Mathematical Institute, Chennai, India}{sri@cmi.ac.in}{}{}
\author{Arnab Sur}{Chennai Mathematical Institute, Chennai, India}{arnabs@cmi.ac.in}{}{}

\authorrunning{B.~B\'erard, B.~Monmege, B.~Srivathsan, and A.~Sur}

\Copyright{Béatrice B\'erard, Benjamin Monmege, B.~Srivathsan, and Arnab~Sur} 

\ccsdesc[500]{Theory of computation~Formal languages and automata theory}
\ccsdesc[500]{Theory of computation~Distributed computing models}

\keywords{asynchronous automata, trace languages, synthesis} 

\nolinenumbers 

\EventEditors{John Q. Open and Joan R. Access}
\EventNoEds{2}
\EventLongTitle{42nd Conference on Very Important Topics (CVIT 2016)}
\EventShortTitle{CVIT 2016}
\EventAcronym{CVIT}
\EventYear{2016}
\EventDate{December 24--27, 2016}
\EventLocation{Little Whinging, United Kingdom}
\EventLogo{}
\SeriesVolume{42}
\ArticleNo{23}

\usepackage{tikz}
\usetikzlibrary{positioning,automata}
\usepackage[table]{xcolor}

\newcommand\N{\mathbb N}

\newcommand\loc{\mathrm{loc}}
\newcommand\events{\mathcal E}
\newcommand\Foata{\mathsf F}
\newcommand\Foatalength[1]{|#1|_{\mathsf F}}
\newcommand\view{\mathrm{view}}

\newcommand\incl{\subseteq}
\newcommand\diam{\mathrm{diam}}
\newcommand\A{\mathcal A}
\newcommand\B{\mathcal B}
\newcommand\ideal[1]{#1{\downarrow}}

\newcommand\length[1]{|#1|}

\newcommand{\synchronise}{\mathsf{synchronise}}
\newcommand{\expand}{\mathsf{expand}}
\newcommand{\cut}{\mathsf{cut}}

\newcommand{\proc}{P}
 
\newcommand{\xra}{\xrightarrow}
\newcommand{\blue}[1]{\textcolor{blue}{#1}}

\newcommand{\dvdcut}{\mathsf{cut\textsf{-}split}}

\newcommand\Lt{L_{\mathsf{tr}}}

\newcommand{\sock}{\mathsf{SOCK}}
\newcommand{\dvline}{%
  \tikz[baseline={(0,-0.5ex)}]
    \draw[dashed] (0,-0.5) -- (0,0.5);%
}

\usepackage[obeyFinal]{todonotes}
\begin{document}

\maketitle

\begin{abstract}
    Given a global specification as a trace-closed regular language, Zielonka's theorem provides a construction to synthesise a language equivalent distributed implementation represented as a deterministic asynchronous automaton (AA). The construction is notoriously complicated, which has led to a line of work that considers restrictions on the specifications or on the distributed architectures, with the objective of providing a conceptually simpler construction. A new construction has recently been provided for ``fair'' specifications, in which all processes participate regularly. 
In this work, we enhance this construction to enable deterministic finite-state automata (DFA) specifications with ``connectedly communicating processes'': there should be a constant delay~$d$ such that if two processes do not hear from one another after this delay, they will never hear from one another until the end of the run. This is a relaxation of the fairness constraint, in which some process may deliberately stop communicating with another one, e.g.~a client-server architecture, where a client stops asking the server for a resource if it did not get any response from it after a while.
Our construction results in an AA
where every process has a number of local states that is polynomial in the
number of states of the DFA, and where the only exponential explosion
is related to the parameter $d$, and the separation depth of processes. 
\end{abstract}

\section{Introduction}

Asynchronous automata (AA) are a foundational model for concurrent systems, where independent processes synchronise on a predefined set of shared actions. The property of connectedly communicating processes was introduced in~\cite{MadThi05} as a subclass of AAs where the monadic second order logic theory is decidable in the branching time setting. An AA is said to be \emph{connectedly communicating} if there is a bound $d$ such that, when some process executes $d$ steps without hearing from some other process, then these two processes will not communicate any more in the future, even indirectly. In the context of distributed algorithms, this property can model a permanent crash of a communication link, which is a realistic assumption in large networks. Some algorithms like consensus or election are designed to be (sometimes partially) resilient with respect to such failures~\cite{Weiss2001ConsensusWW,Biswas2021election}. In the asynchronous communication setting, when there is no uniform bound on communication delays, a process which does not receive messages from another process may also assume that a failure occurred and stop communicating with this process. This would be the case for instance in a client-server architecture, where the client stops sending requests if it does not receive any answer from the server for some time. 

Zielonka's theorem provides a method for synthesising a deterministic AA from a given deterministic finite-state automaton (DFA) satisfying a certain commutativity property, and an alphabet distribution among processes~\cite{Zie87}. This construction is known to be extremely complex and has been the subject of numerous refinements and optimisations (e.g.~\cite{CorMet93,MukSoh97,Genest06}), leading to an optimal Zielonka-type construction~\cite{GenGim10}. In all these constructions, each process keeps track of what it believes to be the latest ``information'' available to every other process. When there is a shared action, the processes share their local information, reconcile them and update their local states. This underlying idea is implemented using a \emph{gossip automaton} in~\cite{MukSoh97,Madhavan-Notes-Zielonka}, and so-called \emph{zones} in~\cite{Genest06,GenGim10}. This yields a number of local states that is exponential in the number of processes, whereas the construction can be kept polynomial in the number of states of the~DFA. 

Several solutions have been considered to avoid the resolution of this \emph{gossip problem}, e.g.~compositional methods~\cite{Bau11,pighizzini1993synthesis,baudru2006unfolding} that still require a number of states exponential in the number of processes. Other solutions are based on restricting the topology of communications~\cite{KRISHNA2013109,HauLeh24} and 
a recent work~\cite{AdsGas24} has proposed a logic-based construction going through a local, past-oriented fragment of propositional dynamic logic. The AA is obtained by a cascade product of localised AAs, which essentially operate on a single process. 

Finally, a recent proposal~\cite{BerMon26} considered so-called \emph{fair} specifications: a DFA is $d$-fair whenever every word
of length $d$ that can be read in the DFA makes every process participate at least once. In this case, a synthesis 
procedure is presented where the resulting AA is linear in the size of the DFA, polynomial in the size of the alphabet, and the only exponential explosion is related to the parameter $d$. The complexity is independent of the number of processes. The synthesis procedure has been implemented in a tool FAAST~\cite{10.5281/zenodo.18165965} and has also been used to synthesise AAs recognising all $d$-fair traces of a (potentially non-fair) DFA specification.
Other orthogonal restrictions on the DFA specification leading to implementations include safe AAs~\cite{SteEsp03} and realistic AAs~\cite{AksDin13}.

Our goal in this paper is to present an AA synthesis procedure for specifications corresponding to connectedly communicating processes. The condition of connected communication that we study in this article can be seen as a relaxation of the fairness constraint: we first show that in DFA specifications modelling connectedly communicating processes, the strongly connected components are fair for the subset of processes that are still potentially communicating with each other. Then, we adapt the construction proposed in~\cite{BerMon26}, wherein each process maintains a finite suffix of its view. Thanks to the fairness hypothesis, the prefix of its view that it has cut is known by every other process. We cannot do this any more for our case of connected communication, since processes may separate after having communicated to a certain point. To handle this aspect, we need a mechanism to cut parts of the view in the middle (and not necessarily a prefix). Hence, our modified construction splits the view into \emph{phases} and cuts a prefix within each phase. 
The phase splits happen when some processes stop communicating, and thus do not occur too often. Therefore, we obtain a construction that is polynomial in the
number of states of the DFA, and where the only exponential explosion
is related to the parameter $d$, and the separation depth of processes. 

After presenting preliminary notions in Section~\ref{sec:prelim}, we recall the construction for fair systems~\cite{BerMon26} in Section~\ref{sec:fair-recall}. In Section~\ref{sec:ccp-basic}, we provide a characterization of DFAs corresponding to connectedly communicating processes and go on to the full construction in Section~\ref{sec:ccp-construction}. Additional examples are presented in Appendix~\ref{app:additional-examples}. Missing proofs and technical details can be found in other clearly marked appendices.

\section{Preliminaries}\label{sec:prelim}

We build upon the work in~\cite{BerMon26}, and thus reuse some of the preliminary definitions as in~\cite{BerMon26}.
%
Let $\Sigma$ be a finite alphabet. A \emph{concurrent alphabet} is a pair $(\Sigma, I)$ with $I\subseteq \Sigma\times \Sigma$ a symmetric and irreflexive relation called the \emph{independence relation}. The corresponding dependence relation is the set $D = (\Sigma\times \Sigma) \setminus I$. 
A concrete independence relation can be obtained by distributing the letters into a finite set $P$ of processes, formally defined by a function $\loc\colon \Sigma \to 2^P$ where $\loc(a)$ is the set of processes that can read the letter $a \in \Sigma$. We further define $\Sigma_p = \{a \in \Sigma \mid p \in \loc(a)\}$ as the alphabet of process $p\in P$. We call $(\Sigma, \loc)$ a \emph{distributed alphabet}. Such a distribution naturally leads to an independence relation $I_\loc = \{(a, b) \in \Sigma\times \Sigma \mid \loc(a) \cap \loc(b) = \emptyset\}$.
We extend the function $\loc$ to words by letting $\loc(\varepsilon) = \emptyset$, and $\loc(ua) = \loc(u) \cup \loc(a)$ for all $u \in \Sigma^*$ and $a \in \Sigma$.

\begin{example}\label{ex:distributed-alphabet} In the examples of this article, we will describe letters of the alphabet using the set of processes that synchronise on them -- always supposing that there is at most one action used by any subset of processes to synchronise. For instance, consider $P= \{p_1, p_2, p_3,p_4, p_5\}$ to be the processes. For our running example, we will use the alphabet $\Sigma = \{12, 13, 2, 235, 24, 34, 35, 4\}$. The distribution is given by $\loc(i_1 \dots i_\ell) = \{p_{i_1}, \dots p_{i_\ell}\}$ for each $i_1 \dots i_\ell \in \Sigma^+$ with each $i_j \in \{1, 2, \dots 5\}$. As an illustration, $\loc(12) = \{p_1, p_2\}$, and $\Sigma_{p_1} = \{12, 13\}$. The independence relation $I_\loc$ contains $(12, 34)$, and $(2, 35)$ for instance, but not $(12, 13)$ since process $p_1$ participates in both actions.  
\end{example}



\subparagraph*{Traces} A \emph{trace} over a concurrent alphabet $(\Sigma, I)$ is a labelled partial order $t = (\events, \leq, \lambda)$ where $\events$ is a set of \emph{events}, $\lambda\colon \events \to \Sigma$ labels each event by a letter, and $\leq$ is a partial order of $\events$ such that for all $e, f \in \events$, 
$(\lambda(e), \lambda(f)) \notin I$ implies $e\leq f$ or $f\leq e$; and secondly, 
$e \lessdot f$ implies $(\lambda(e), \lambda(f)) \notin I$ where ${\lessdot} =  {\le} \setminus {\le^2}$ is the immediate successor relation induced by the partial order. 
The number of events in the trace $t$ is denoted as $|t|$, and is called its \emph{length}. If we have a distributed alphabet $(\Sigma, \loc)$, we also let $|t|_p$ be the number of events in which a process $p$ participated, i.e.~$|t|_p = |\{e\in \events \mid \lambda(e)\in \loc(p)\}|$. 

A word $w=a_0\cdots a_{n-1}\in \Sigma^*$ gives rise to a unique trace by putting one event per position in the word, and defining the successor relation $\lessdot$ as all the pairs $(i, j)$ of positions such that $i<j$, $(a_i, a_j)\notin I$ and there are no positions $k$ such that $i<k<j$ and $(a_i, a_k), (a_k, a_j)\notin I$. We call this word a \emph{linearisation} of the trace. Two words $w$ and $w'$ mapped to the same trace are said to be \emph{equivalent}. 
The lengths $|w|$ and $|w|_p$ are defined similarly as for traces.

Minimal (respectively, maximal) elements of a trace $t$ are all the events that have no smaller (respectively, larger) events. The set of minimal (respectively, maximal) events of~$t$ is denoted by $\min(t)$ (respectively, $\max(t)$). 
Given two traces $t_1 = (\events_1, \leq_1, \lambda_1)$ and $t_2 = (\events_2, \leq_2, \lambda_2)$, the concatenation $t_1 t_2$ is the trace $(\events', \leq', \lambda')$ where $\events'= \events_1 \cup \events_2$, $\lambda'(e) = \lambda_1(e)$ if $e \in \events_1$, and $\lambda'(e) = \lambda_2(e)$ otherwise, and $\leq'$ is the transitive closure of ${\leq_1} \cup {\leq_2} \cup \{ (x, y) \mid  x \in \events_1, y \in \events_2, \text{ and } (\lambda'(x), \lambda'(y)) \notin I\}$.  

\begin{example}\label{ex:trace} A trace over the distributed alphabet of Example~\ref{ex:distributed-alphabet} can be depicted 
  as follows, where the arrows between the events denote the $\lessdot$ relation:
  \begin{center}
    \scalebox{.8}{
    \begin{tikzpicture}[>=latex,node distance=0.2cm and .6cm]
      \node (a) {$12$};
            \node[below right=of a] (c) {$235$};
            \node[below left=of c] (b) {$34$};
            \node[above right=of c](d) {$12$};
            \node[below right=of d] (f) {$235$};
            \node[below left=of f] (e) {$34$};
            \node[above right=of f](g) {$12$};
            \node[below right=of f] (h) {$34$};
            \node[right=of g] (i) {$13$};
            \node[right=of h] (j) {$24$};
            \node[right=of i] (k) {$35$};
            \node[right=of k] (n) {$13$};
            \node[right=of j] (m) {$4$};
            \node[above=0.1cm of m] (l) {$2$};
            \node[right=of m] (o) {$4$};

      \draw[->] (a) edge (c)
      (b) edge (c)
      (c) edge (d)
      (c) edge (e)
      (d) edge (f)
      (e) edge (f)
      (f) edge (g)
      (f) edge (h)
      (g) edge (i)
      (g) edge (j)
      (h) edge (i)
      (h) edge (j)
      (i) edge (k) 
      (k) edge (n)
      (j) edge (l)
      (j) edge (m)
      (m) edge (o);
    \end{tikzpicture}}
  \end{center}
   One of its linearisations is $12\quad 34 \quad 235\quad 34\quad 12\quad 235\quad 12\quad 34\quad 13\quad 35\quad 24\quad 4\quad 2\quad 13\quad 4$. For instance, the two first events can be switched to obtain another linearisation of the same trace.
   The minimal events are the ones labelled by $12$ and $34$ on the left. The maximal events are those labelled by $13$, $2$ and $4$ on the right. 
  
\end{example}



\subparagraph*{Views}
For a trace $t = (\mathcal E, {\leq}, \lambda)$, a subset $J\subseteq \mathcal E$ is called an \emph{ideal} of $t$ if for all $e\in J$, and $f\in \mathcal E$ such that $f\leq e$, we have $f\in J$. An ideal can also be seen as a trace $u$ which is 
a \emph{prefix} of the trace $t$, i.e., there exists a trace $v$ such that $t = uv$.
For a subset $S\subseteq \mathcal E$ of events, we let $\ideal{S}$ be the ideal $\{f\in \mathcal E \mid \exists e\in S \; f\leq e\}$.
We identify special ideals called \emph{views}: the \emph{view} of a process $p$ in a trace $t$ is the ideal of~$t$ consisting of all events currently known by the process $p$. Formally, we let $\max_p(t)$ be the largest event of $t$ that is labelled in $\Sigma_p$. Then, the view of process $p$ in a trace $t$, denoted as $\view_p(t)$, is the ideal $\ideal{\{\max_p(t)\}}$. 
The view of a set of processes $X \subseteq P$ in a trace $t$ is the ideal 
$\ideal{\{\max_p(t) \mid p\in X\}}$, obtained as the union of the views of the processes in $X$. 

\begin{example}\label{ex:views} Consider again the trace of Example~\ref{ex:trace}. Then, $\view_{p_2}(t)$ is equal to:
\begin{center}
    \scalebox{.8}{
    \begin{tikzpicture}[>=latex,node distance=0.2cm and .6cm]
      \node (a) {$12$};
            \node[below right=of a] (c) {$235$};
            \node[below left=of c] (b) {$34$};
            \node[above right=of c](d) {$12$};
            \node[below right=of d] (f) {$235$};
            \node[below left=of f] (e) {$34$};
            \node[above right=of f](g) {$12$};
            \node[below right=of f] (h) {$34$};
            \node[right=of h] (j) {$24$};
            \node[right=of j,yshift=.5cm] (l) {$2$};

      \draw[->] (a) edge (c)
      (b) edge (c)
      (c) edge (d)
      (c) edge (e)
      (d) edge (f)
      (e) edge (f)
      (f) edge (g)
      (f) edge (h)
      (g) edge (j)
      (h) edge (j)
      (j) edge (l);
    \end{tikzpicture}}
  \end{center}
\end{example}


\subparagraph*{Foata normal form} The Foata normal form (FNF) of a trace encodes a maximal parallel execution of the trace \cite{CarFoa69}. To define this notion formally, we use the concept of \emph{steps} from~\cite{Diekert-Muscholl-11}. A \emph{step} is a non-empty subset $S \incl \Sigma$ of pairwise independent letters: it is sometimes called a \emph{clique}, from the point of view of the graph of the independence relation. In a step, all letters can be executed in parallel.  Observe that the set of labels of the minimal elements of a trace $t$ provides a maximal step for $t$. Then, the FNF $\Foata(t)$ of a trace $t$ is a sequence of steps $\varphi = S_1 S_2 \cdots S_m$ where $S_1$ is the set of minimal elements of $t$, and $S_2 \cdots S_m = \Foata(t')$ where $t'$ is the trace obtained by removing all minimal elements from $t$. The FNF is a unique decomposition of the trace into maximal steps. We let $|\varphi|$ be the total number of letters across all the steps of $\varphi$. We write $\Foata(t)_i$ for the $i$-th step of $\Foata(t)$. 

\begin{example}\label{ex:Foata} The FNF of the view of $p_2$ in Example~\ref{ex:views} is 
    \[\Foata(\view_{p_2}(t)) = \{12,34\}\{235\}\{12,34\}\{235\}\{12,34\}\{24\}\{2\}\] 
    Notice that we have depicted traces by putting the events that are in one step of the FNF within the same vertical line.
\end{example}

We denote by $\varphi_i$ the $i$-th step of the FNF $\varphi$. We shall also denote by $\Foatalength t$, the number of steps in the FNF decomposition of the trace $t$, and we call it the \emph{Foata length of $t$}.
In order to simplify further explanations, we suppose that a step $\varphi_i$ exists as the empty set, for all $i$ greater than the length of $\varphi$: we do not depict this infinite sequence of empty sets when we give examples of FNF. 

A crucial feature for our construction is the fact that FNF are increasing with respect to ideals: if a trace $s$ is an
ideal of~$t$, there is an injective correspondence from steps of $s$ to the
first $\Foatalength s$ steps of $t$. 
%
%
Furthermore, if we know the FNF of the views of two (or more) processes $p_1$ and $p_2$, it is possible to deduce the FNF of the view of $\{p_1, p_2\}$: they can accumulate their knowledge simply by taking pairwise unions of the steps. This gives us an algorithm to compute $\view_X(t)$ for some $X \subseteq P$, given $\view_p(t)$ for all $p \in X$. 
In the following, given two FNF $\varphi=S_1\cdots S_m$ and $\varphi'=S'_1\cdots S'_{m'}$, we write $\varphi \cup \varphi'$ for the sequence of steps $(S_1 \cup S'_1) \cdots (S_n \cup S'_n)$ where $n=\max(m,m')$ (remember that we have added empty steps at the end of the FNF so that $S_k$ has a meaning, even for $k$ greater than $\Foatalength \varphi$).

\begin{lemma}[\cite{BerMon26}]
  \label{lem:viewFoata}
  Let $t$ be a trace and $X \subseteq P$. Then, $\Foata(\view_X(t)) = \bigcup\limits_{p\in X} \Foata(\view_p(t))$.
\end{lemma}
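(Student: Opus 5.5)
The plan is to prove the lemma through the \emph{height} (or \emph{level}) of events. For an event $e$ of a trace $t$, let $h_t(e)$ be the number of events on a longest $\lessdot$-chain ending at $e$, counting $e$ itself. A direct induction on the recursive definition of the FNF shows that $\Foata(t)_i$ is exactly the multiset of labels of the events $e$ with $h_t(e)=i$. Indeed, the minimal events are precisely those of height $1$. Removing them decreases the height of every remaining event by exactly one: every longest chain to such an event begins with a minimal event, and every chain in $t'$ extends backwards to a minimal event of $t$. So the first step is to establish this characterisation, with steps viewed as sets of events rather than of letters. Note that two events in the same step carry independent letters, hence distinct letters, so events and labels within a step are in bijection.

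The second step is the key observation: for an ideal $J$ of $t$, viewed as a trace $u$, we have $h_u(e)=h_t(e)$ for all $e\in J$. This holds because any chain in $t$ ending at $e$ lies entirely inside $J$, by downward closure. The order and the $\lessdot$ relation restricted to $J$ are the induced ones, since $J$ is a prefix. Consequently $\Foata(u)_i=\{\lambda(e)\mid e\in J,\ h_t(e)=i\}$. This is the precise form of the ``injective correspondence'' mentioned before the lemma.

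The third step is to apply this to $\view_X(t)=\bigcup_{p\in X}\view_p(t)$. This union of ideals is an ideal by definition, since $\ideal{S}=\bigcup_{e\in S}\ideal{\{e\}}$. For every $i$ we then get
\[\Foata(\view_X(t))_i=\{\lambda(e)\mid e\in \textstyle\bigcup_{p}\view_p(t),\ h_t(e)=i\}=\bigcup_{p\in X}\Foata(\view_p(t))_i .\]
This holds with the convention that steps beyond the Foata length are empty, which matches the padding in the definition of $\varphi\cup\varphi'$.

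The main obstacle is a subtlety rather than a hard computation: the union is taken on sets of \emph{letters}, so I must make sure that no collision loses information. Suppose two distinct events $e\in\view_p(t)$ and $e'\in\view_{q}(t)$ with $h_t(e)=h_t(e')=i$ carry the same letter. Then these letters are dependent, so $e$ and $e'$ are comparable, and comparable events have different heights, which is a contradiction. Hence equal labels at the same level come from the same event, and the letter-level union coincides with the event-level union. I would prove this claim first, since it also justifies identifying steps with sets of letters throughout.
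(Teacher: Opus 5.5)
Your proof is correct and follows essentially the route the paper indicates: the paper gives no proof of its own and defers to \cite{BerMon26}, justifying the lemma by the index-wise embedding of the Foata steps of an ideal into those of the whole trace together with $\view_X(t)=\bigcup_{p\in X}\view_p(t)$, and your height characterisation is a precise form of that embedding. The collision check in your last paragraph is not needed for the equality itself, since taking labels commutes with unions of sets of events; it only serves to justify treating steps as sets of letters.
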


\begin{example}
  Continuing Example~\ref{ex:Foata}, we have 
  \[\Foata(\view_{p_1}(t)) = \{12,34\}\{235\}\{12,34\}\{235\}\{12,34\}\{13\}\{35\}\{13\}\] 
  from which we can indeed deduce (by making a stepwise union)
  \[\Foata(\view_{\{p_1, p_2\}}(t)) = \{12,34\}\{235\}\{12,34\}\{235\}\{12,34\}\{13,24\}\{35, 2\}\{13\}\]
\end{example}

\subparagraph*{Regular trace-closed languages and asynchronous automata}
A language $L \subseteq \Sigma^*$ is said to be \emph{trace-closed} (for the independence relation $I$) if for all equivalent $w, w' \in \Sigma^*$, $w \in L$ iff $w' \in L$.
A trace-closed language is \emph{regular} if it is accepted by a deterministic finite state automaton (DFA) $\A = (Q, \Sigma, q_0, \delta, Q_f)$, where $Q$ is the set of states, $q_0$ is the unique initial state, $Q_f$ the accepting states and $\delta\colon Q\times \Sigma \to Q$ the partial function of transitions. We denote by $q\xrightarrow a q'$ if $\delta(q, a)=q'$, and we generalise the notation for a word $w$ by letting $q\xrightarrow{w}q'$ if there is a sequence of transitions $q=q_0\xrightarrow {a_0} q_1 \xrightarrow {a_1} \cdots \xrightarrow{a_{n-1}} q_{n} = q'$ such that $w=a_0 a_1 \ldots a_{n-1}$. As usual, the language recognised by $\A$ is the set of words $w$ such that there exists a run $q_0 \xrightarrow{w}q'$ with $q'\in Q_f$. 
Trace-closed regular languages are recognised by automata having a special syntactic property, called the \emph{diamond property}. 

\begin{definition} For a concurrent alphabet $(\Sigma, I)$, a finite state automaton $\A = (Q, \Sigma, q_0, \delta, Q_f)$ satisfies the \emph{diamond property} if for all $q, q', q'' \in Q$ and $(a, b) \in I$ such that $q \xrightarrow{a} q' \xrightarrow b q''$, there exists $q'''\in Q$ such that $q \xrightarrow{b} q''' \xrightarrow a q''$. 
\end{definition}

For example, the DFA shown in Figure~\ref{fig:fair-example} satisfies the diamond property: $q_0 \xrightarrow{12} q_1 \xrightarrow{34} q_2$ and $q_0 \xrightarrow{34} q_3 \xrightarrow{12} q_2$. Similarly, the DFA shown in Figure~\ref{fig:ccp-automaton} satisfies the diamond property. The diamond property is a sufficient condition for an automaton to recognise a trace-closed language. Indeed, if $\A$ is a DFA satisfying the diamond property then for all $q, q'\in Q$ and equivalent $w, w' \in \Sigma^*$, $q\xrightarrow w q'$ iff $q\xrightarrow{w'} q'$. In particular, we extend this notation to traces $t$ and Foata normal forms $\varphi$, writing $q\xrightarrow t q'$, and $q\xrightarrow{\varphi} q'$. For instance, in the DFA of Figure~\ref{fig:ccp-automaton}, starting from $q_0$, both the words $12~34~13~24$ and $34~12~24~13$ lead to the same state $q_7$.




We will need a function already used in previous work to reconcile the views of several processes that have read independent letters after a point where they last synchronised:
\begin{lemma}[\cite{CorMet93}]\label{lem:diam}
There exists a function $\diam\colon Q \times Q \times Q \times 2^P \to Q$  such that for all states $q_1, q_2, q_3 \in Q$, subsets of processes $X \incl P$, and traces $t_1, t_2, t_3$ with
\begin{enumerate}
  \item $\delta(q_0, t_1) = q_1$, $\delta(q_0, t_1 t_2) = q_2$, $\delta(q_0, t_1 t_3) = q_3$, 
  \item and $\loc(t_2) \incl X$, $\loc(t_3) \incl P \setminus X$,
\end{enumerate}
 we have $\delta(q_0, t_1 t_2 t_3) = \diam(q_1, q_2, q_3, X)$.
\end{lemma}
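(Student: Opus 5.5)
We need to prove the existence of the function $\diam$ of Lemma~\ref{lem:diam}. The plan is to define $\diam(q_1,q_2,q_3,X)$ explicitly by a search over words, and then use determinism together with the diamond property to show that the value does not depend on the choices made. Throughout we assume $\A$ satisfies the diamond property, so that $q\xrightarrow{t}q'$ is well defined for traces.

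\textbf{Definition.} Fix $q_1,q_2,q_3$ and $X$. Suppose there exist words $u,v$ with $q_1\xrightarrow{u}q_2$, $q_1\xrightarrow{v}q_3$, $\loc(u)\incl X$ and $\loc(v)\incl P\setminus X$, and such that $q_2\xrightarrow{v}q$ for some state $q$. Then we set $\diam(q_1,q_2,q_3,X)=q$. Otherwise $\diam$ takes an arbitrary value, for instance $q_1$. This is a finite search, since shortest witnesses are bounded by $|Q|^2$ via a product construction, but computability is not even needed for existence.

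\textbf{Well-definedness and correctness.} The main step is to show that whenever the hypotheses of the lemma hold, every admissible choice of $(u,v)$ yields $\delta(q_0,t_1t_2t_3)$, and that this state is defined.

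1. Since $\loc(t_2)\incl X$ and $\loc(t_3)\incl P\setminus X$, every letter of $t_2$ is independent of every letter of $t_3$. Hence $t_1t_2t_3=t_1t_3t_2$ as traces.
2. We need $\delta(q_0,t_1t_2t_3)$ to be defined. The diamond property as stated only permutes letters of an already-defined run, so some forward-closure is required. The intended reading of the lemma (as in~\cite{CorMet93}) is that this state is defined, i.e.\ that $t_1t_2t_3$ is runnable. We add this as an implicit hypothesis, or equivalently assume $\A$ is complete with a sink state.
3. Take any admissible $u,v$ and set $r=\delta(q_1,uv)$.
4. Let $q^*=\delta(q_0,t_1t_2t_3)$. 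We want to conclude $r=q^*$ purely from states, without knowing $t_2$ or $t_3$. A direct uniqueness argument fails: knowing only $q_1,q_2,q_3$, two different pairs $(t_2,t_3)$ and $(u,v)$ could in principle give different diagonal states.

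\textbf{Main obstacle.} The difficulty is exactly the independence of the diagonal state from the chosen words. A plain DFA does not guarantee it: from $q_1$, the words $u$ and $t_2$ both reach $q_2$, but $q_2\xrightarrow{v}$ and $q_2\xrightarrow{t_3}$ might a priori diverge. The standard remedy is the one used in the Zielonka-type constructions of~\cite{CorMet93,MukSoh97}: work with the minimal DFA of the trace-closed language, or with states refined by Nerode equivalence.

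1. Let $L_q$ denote the residual language of state $q$. For a $P\setminus X$-word $y$ and an $X$-word $x$, we have $xy\equiv yx$.
2. Show that $L_r=L_{q^*}$ by checking the residuals of $\delta(q_1,uv)$ and $\delta(q_1,t_2t_3)$ against arbitrary suffixes $z$. Decompose $z$ into its $X$-part and its $(P\setminus X)$-part relative to subsequent synchronisations, and use trace-closedness to swap $u$ with $t_2$, since both lead $q_1$ to $q_2$, and $v$ with $t_3$, since both lead $q_1$ to $q_3$.
3. Minimality of $\A$ then gives $r=q^*$.

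This swapping-in-residuals argument is where I expect the real work. If $\A$ is not minimal, I would first minimise it, noting that minimisation preserves the diamond property, and lift the result back by taking $\diam$ on Nerode classes.
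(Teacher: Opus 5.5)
Your definition of $\diam$ (choose any $u,v$ with $\delta(q_1,u)=q_2$, $\delta(q_1,v)=q_3$, $\loc(u)\incl X$, $\loc(v)\incl P\setminus X$, and return $\delta(q_1,uv)$) is exactly the one in Appendix~\ref{app:diam}. Your remark that the statement tacitly needs $\delta(q_0,t_1t_2t_3)$ to be defined (or $\A$ completed with a sink) is fair. The gap is the central step. You claim a direct argument fails because $q_2\xrightarrow{v}$ and $q_2\xrightarrow{t_3}$ might diverge in a DFA with the diamond property, and you leave the replacement argument unproved. That claim is false. The missing idea is to compare the two runs not at $q_2$ but by commuting through $q_3$:
\[
\delta(q_1,uv)=\delta(q_2,v)=\delta(q_1,t_2v)=\delta(q_1,vt_2)=\delta(q_3,t_2)=\delta(q_1,t_3t_2)=\delta(q_1,t_2t_3)=\delta(q_0,t_1t_2t_3).
\]
The first, second, fourth and fifth equalities only use determinism together with $\delta(q_1,u)=\delta(q_1,t_2)=q_2$ and $\delta(q_1,v)=\delta(q_1,t_3)=q_3$. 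The third and sixth use that $\loc(t_2)\incl X$ and $\loc(v),\loc(t_3)\incl P\setminus X$ make the swapped words equivalent, so the diamond property applies. The last is $\delta(q_0,t_1)=q_1$. In each equality one side is defined if and only if the other is, so the chain also settles definedness. Thus $\delta(q_1,uv)$ is defined for one (equivalently, every) admissible pair exactly when $t_1t_2t_3$ is runnable, and then its value does not depend on the choice of $(u,v)$. No minimality or Nerode argument is needed.

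Your fallback would not close the gap anyway. Equality of residuals $L_r=L_{q^*}$ in the minimal automaton only shows that $\delta(q_1,uv)$ and $\delta(q_0,t_1t_2t_3)$ are Nerode-equivalent. The lemma asserts equality of states of the given, not necessarily minimal, $\A$, and the construction of Section~\ref{sec:ccp-construction} feeds the returned state back into $\delta$ and further $\diam$ calls as an exact state. So ``taking $\diam$ on Nerode classes'' produces a different function from the one claimed. Moreover, the residual comparison you defer as ``the real work'' is, once done, just the chain above read at the level of languages, with no need to decompose the suffix $z$.
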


As an example of the $\diam$ function, notice that for the DFA in Figure~\ref{fig:ccp-automaton}, we have $\diam(q_3, q_6, q_4, \{2, 4\}) = q_7$. Thus, if $t_1 = 12~34$, $t_2 = 24$, $t_3 = 13$, we have $\delta(q_0, t_1 t_2 t_3) = q_7$. An algorithm for computing the $\diam$ function for a DFA satisfying the diamond property is sketched in Appendix~\ref{app:diam}. For the rest of the document, when we consider DFA specifications satisfying the diamond property, we will assume that $\diam$ has been precomputed.

Zielonka's theorem aims at distributing a trace-closed regular language to the individual processes 
by using asynchronous automata.

\begin{definition}
  An asynchronous automaton (AA) over the distributed alphabet $(\Sigma, \loc)$ is a tuple $\B = ((Q_p)_{p \in P}, \Sigma, q^0, (\delta_a)_{a \in \Sigma}, F)$ where
  \begin{itemize}
    \item $Q_p$ is the set of local states of a process $p \in P$,
    \item $\delta_a\colon \Pi_{p \in \loc(a)} Q_p \to \Pi_{p \in \loc(a)} Q_p$ is the transition function associated with $a \in \Sigma$,
    \item $q^0 \in \Pi_{p \in P} Q_p$ is the global initial state, and
    \item $F \subseteq \Pi_{p \in P} Q_p$ is the set of global final states.
  \end{itemize}
  The AA is said to be \emph{finite} if each process has a finite number of states.
  \label{def:asynchronousAutomata}
\end{definition}

 Figure~\ref{fig:DFA'-asynchronous} in Appendix~\ref{app:additional-examples} gives an example of an AA. The semantics of an AA is given by a DFA whose set of states is the product $\Pi_{p\in P} Q_p$. We call these states the \emph{global states} of $\B$. Its initial global state is $q^0$, and its final global states are given by $F$. Moreover, for each letter $a\in \Sigma$, we let $(q_p)_{p\in P} \xrightarrow a (q'_p)_{p\in P}$ if for all $p'\notin\loc(a)$, $q'_p = q_p$, and $\delta_a((q_p)_{p\in \loc(a)}) = (q'_p)_{p\in \loc(a)}$. We can show that this DFA satisfies the diamond property, and thus, we let $\Lt(\B)$ be its trace language, and say that the AA~$\B$ recognises $\Lt(\B)$.

We now state Zielonka's fundamental theorem. An illustration of the theorem is presented in Example~\ref{eg:aa-zielonka} in Appendix~\ref{app:additional-examples}.

\begin{theorem}[\cite{Zie87}]
  For every DFA $\A$ over the concurrent alphabet $(\Sigma, I)$ satisfying the diamond property, and every distributed alphabet $(\Sigma, \loc)$ such that $I_\loc = I$, there exists a finite AA $\B$ over $(\Sigma, \loc)$ such that $\Lt(\A) = \Lt(\B)$.
\end{theorem}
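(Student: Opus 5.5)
The plan is the classical route through the gossip problem, phrased in terms of the tools introduced above (Foata normal forms of views, Lemma~\ref{lem:viewFoata} and the function $\diam$ of Lemma~\ref{lem:diam}). The guiding idea is that, after reading a trace $t$, each process $p$ should hold in its local state enough information to recover, for every subset $X$ of processes it may later synchronise with, the DFA state $\delta(q_0,\view_X(t))$. Membership of $t$ in $\Lt(\A)$ is then decided by a global final set $F$ consisting of the tuples of local states from which the combined information yields $\delta(q_0,\view_P(t)) = \delta(q_0,t) \in Q_f$.

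The steps, in order, are as follows.

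\textbf{Step 1.} Each process $p$ stores a finite abstraction of $\view_p(t)$: for every other process $r$, an identifier of the latest $r$-event it knows, together with the DFA state reached by the view of that event. We also store the states $\delta(q_0,\view_{\{r,r'\}}(t))$ restricted to $p$'s knowledge, which is the usual primary/secondary information of Zielonka-type constructions.

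\textbf{Step 2.} On a synchronising letter $a$, the processes in $\loc(a)$ pool their information. First they determine, for each pair of them, which of the two knows more recent information about each third process; this ordering of latest events is the gossip problem. Next, starting from the state of the most informed participant, they merge the states of the others one at a time using $\diam$. Lemma~\ref{lem:diam} applies at each merge: take $t_1$ as the common prefix known to both, $t_2$ as the part known only to the first group, located on processes $X$, and $t_3$ as the part known only to the other group, located on $P\setminus X$. This yields $\delta(q_0,\view_{\loc(a)}(t))$, and applying $\delta(\cdot,a)$ then gives the new common view state. Lemma~\ref{lem:viewFoata} justifies that the merged view is exactly the union of the individual views, so no events are lost or duplicated.

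\textbf{Step 3.} We prove by induction on $|t|$ the invariant that the local states correctly encode these view states. Correctness of $F$ then follows because $\view_P(t)=t$, and the global semantics is a diamond DFA, so the recognised trace language is well defined.

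The main obstacle is finiteness. Event identifiers are unbounded, so the latest-information comparison in Step 2 must be carried out with bounded labels. My plan is to import a bounded time-stamping (gossip) automaton as in \cite{MukSoh97}: each process recycles a finite pool of labels and keeps enough secondary information to compare the latest $r$-events known by two processes. Proving that recycled labels are never confused is the delicate part, and I would cite it rather than reprove it. Given such a gossip automaton, the remaining state components lie in finite sets of the form $Q^{P\times P}$, so the resulting AA $\B$ is finite.
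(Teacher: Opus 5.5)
The paper does not prove this statement. It quotes it from \cite{Zie87} as background. Its own constructions (bounded Foata suffixes of views, cuts of $\sock$ steps, phases) only produce an AA for fair or connectedly communicating DFAs, precisely to avoid the general gossip machinery you invoke. Your sketch therefore has to stand on its own, and it leaves the central step unproved.

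\textbf{The gap.} It lies in the state $\delta(q_0,t_1)$ that you feed into $\diam$ in Step~2. For two participants $p,q$, the common part is $J=\view_p(t)\cap\view_q(t)$: the ideal generated, for each process $r$, by the older of the two latest $r$-events.

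A gossip automaton as in \cite{MukSoh97} decides which participant is older on each $r$, so it identifies $J$ abstractly. It says nothing about the DFA state reached after $J$. In general $J$ has many pairwise incomparable maximal events contributed by both participants, for instance when information about several processes $r_1,\dots,r_k$ reaches $p$ and $q$ through different relay processes. So $\delta(q_0,J)$ is not among the values you store, namely the states after $\view_r(\view_p(t))$ and after $\view_{\{r,r'\}}(\view_p(t))$. Obtaining it by a further $\diam$-merge requires the state after yet another intersection of views of views, and you do not show that this recursion bottoms out in stored data.

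The same problem affects your update. After reading $a$, consider a new pairwise entry $\delta(q_0,\view_{\{r,r'\}}(\view_p(ta)))$ whose latest $r$-information comes from $p$ and whose latest $r'$-information comes from $q$. It is the state after $\ideal{e}\cup\ideal{f}$ with $e,f$ supplied by different participants. Computing it needs the state after $\ideal{e}\cap\ideal{f}$, which is not part of the stored data. So the claim that the data lies in a set like $Q^{P\times P}$ and is closed under synchronisation is unsupported.

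Designing a finite family of such states (or transition-monoid elements of differences of views) that is closed under these merges is the hard core of Zielonka-type constructions. It is handled, e.g., with information indexed by sets of processes, or with zones \cite{Genest06,GenGim10}. The bounded recycling of labels, which you single out as the main obstacle and propose to cite, only settles the comparison of latest events.

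The paper escapes this issue only because of its hypotheses. Under them, each process holds the relevant part of its view explicitly as a bounded Foata normal form. A common part can then be computed by stepwise intersection, starting from a state all parties agree on (invariant (E)). No such bounded explicit representation of views exists in general.

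\textbf{Two smaller points.} First, in Lemma~\ref{lem:diam} the set $X$ must contain $\loc(t_2)$ and avoid $\loc(t_3)$. The part known only to the first group generally involves processes outside that group, so $X$ must be the set of processes on which that group has strictly newer information, not the group itself. Second, there need not be a single most informed participant to start the merge from.
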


\section{Construction for fair systems}
\label{sec:fair-recall}

We briefly recall the AA construction of \cite{BerMon26} when the DFA we start with is
\emph{fair}: a DFA $\A$ is fair if each process participates at least once in every cycle of $\A$. For example, the DFA of Figure~\ref{fig:fair-example} is fair. We define $d$ to be the least integer such that in every path of length~$d$ all processes necessarily participate: observe that $d$ is at most $1$ plus the length of the longest acyclic path in the DFA. Such a DFA is called  \emph{$d$-fair}. Accordingly, a trace $t$ is said to be \emph{$d$-fair} if for every factor $u$ of any linearisation $w$ of $t$, whenever $|u|\geq d$, we have $\loc(u) = \proc$, the set of all processes. The language of a $d$-fair DFA consists of only $d$-fair traces. 

\subparagraph*{An AA maintaining local views} 
A first na\"ive construction gives an infinite-state AA, which is correct even for unfair DFAs. On a trace $t$, the construction maintains $\view_p(t)$ as the local state of a process $p$. The main idea is to store the view in Foata normal form, i.e., $\Foata(\view_p(t))$. 
The advantage of maintaining views in Foata normal form stems from Lemma~\ref{lem:viewFoata} which explains how to obtain the combined view $\Foata(\view_X(t))$ from the views of each process of $X$. Therefore, when a new letter $a$ appears, all processes in $\loc(a)$ synchronise by taking a stepwise union of their local views, and then adding $a$ as a new step, resulting in their new local state $\Foata(\view_{\loc(a)} (t)) \{a\}$. Clearly, there are infinitely many states for each process. An example of such an AA maintaining views is depicted in Figure~\ref{fig:fair-example}. For fair systems, it turns out that maintaining a bounded suffix of the view suffices.

\subparagraph*{Fair systems: cutting view prefixes} For a $d$-fair trace $t$, when the local view of a process is sufficiently long, it can infer that a prefix of its view is identical to a prefix of the full trace $t$, and moreover, this prefix is present in the view of all the other processes too: 

\begin{lemma}[\cite{BerMon26}]
  \label{lem:everyoneKnowsBeyond2KFoataComponents}
  Let $t$ be a $d$-fair trace and $p \in P$ be a process with a view of length at least $2d-2$. Let $i$ be the largest index such that the steps $\Foata(\view_{p}(t))_i \cdots \Foata(\view_{p}(t))_{\Foatalength{\view_{p}(t)}}$ contain at least $2d-2$ letters. 
  
  Then, for all $p' \in P$, we have
  $\Foata(t)_1 \cdots \Foata(t)_{i-1} = \Foata(\view_{p'}(t))_1 \cdots \Foata(\view_{p'}(t))_{i-1}$.
\end{lemma}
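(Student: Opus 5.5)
We aim to show two things about the first $i-1$ steps of $\Foata(\view_p(t))$: they coincide with the first $i-1$ steps of $\Foata(t)$, and every other process $p'$ already knows all of them. Throughout we use that $\view_{p'}(t)$ is an ideal of $t$. Since FNF is monotone with respect to ideals, $\Foata(\view_{p'}(t))_k \subseteq \Foata(t)_k$ for every $k$, as events of an ideal keep their Foata level. It therefore suffices to prove the converse inclusion: every event $e$ of $t$ whose Foata level is at most $i-1$ lies in $\view_{p'}(t)$ for every $p'$. The claim for $p$ itself is the special case $p'=p$.

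Fix such an event $e$ at level $k\le i-1$. By the choice of $i$, the steps $\Foata(\view_p(t))_i\cdots$ contain at least $2d-2$ events. Consider the suffix of $\view_p(t)$ made of these events, written $s$, of Foata length $m$ and size at least $2d-2$. We want a chain of dependent events inside the views that starts above $e$ and reaches every process. The natural strategy is to use $d$-fairness on a suitable linearisation. Take a linearisation of $t$ that lists $\view_p(t)$ first, Foata step by Foata step. The events of $s$ then form a contiguous factor of length at least $2d-2$, which we split into two consecutive factors $u_1u_2$ with $|u_1|\ge d-1$ and $|u_2|\ge d-1$.

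The key step is to exploit fairness of factors of length $d$. The factor formed by $u_1$ together with one preceding or following letter has all processes in its locations. We would instead argue directly about the events. Every event $f$ in $s$ is at level $\ge i>k$, so $f$ is not below $e$ in a way that matters. Rather, we need $e\le f$ for some event $f$ of each process that is inside $\view_{p'}(t)$.

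The cleaner route is the following. For each process $p'$, let $g=\max_{p'}(t)$. We must show $e\le g$. Note that $t$ itself is $d$-fair, so in any linearisation the last occurrence of a $p'$-letter is followed by fewer than $d$ letters. Symmetrically, within $\view_p(t)$, events at levels $\ge i$ number at least $2d-2$. We argue by contradiction. Suppose $e\not\le g$, and let $A=\{f\in t: e\le f\}$, the upward closure of $e$. Then $t$ admits a linearisation of the form $t\setminus A$ followed by $A$. Since $e\not\le g$, some $p'$-event ($g$ itself) lies in $t\setminus A$, while no $p'$-event can occur in $A$ after $g$, because $g$ is the last $p'$-event and $A$ is upward closed. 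Hence $A$, as a factor, avoids $p'$, so $|A|\le d-1$. On the other hand, $A$ contains many events. Every event at level $\ge i$ in $\view_p(t)$ lies above some event at level $k+1,\dots$. More precisely, one must show that a long enough part of $s$ lies in $A$.

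Here is the main obstacle. Not every event of $s$ is above $e$; only a chain is guaranteed by the Foata structure, since each event at level $j+1$ has a predecessor at level $j$. We therefore apply the argument symmetrically, using $B=$ the set of events not above $e$ and not below... The plan that should work is to use the $2d-2$ bound as two halves. Events of $s$ outside $A$ form, together with $e$'s level, a factor avoiding ... Concretely, we linearise $\view_p(t)$ as (events not $\ge e$) then ($A\cap\view_p(t)$). Both the part of $s$ not in $A$ and the part in $A$ avoid some process: the former avoids $\loc(e)$-dependent letters in a suitable window, and the latter avoids $p'$. Each part then has fewer than $d$ letters, so $|s|\le 2d-2$. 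Strictness at one end (the $-1$s in $d-1$) gives the contradiction. Getting the first half right is the step we expect to be delicate. We would first try to show that $\max_p(t)\in A$, using $k<i$ and $p$'s own chain, so that the non-$A$ part of $s$ is an antichain-like factor independent of $p$.
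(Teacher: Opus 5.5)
Your first half is sound. Suppose an event $e$ of Foata level $k\le i-1$ in $t$ is not below $g=\max_{p'}(t)$. Then $A=\{f\mid e\le f\}$ is upward closed, is a suffix of a linearisation, and contains no $p'$-event, so $|A|\le d-1$ by $d$-fairness. You also correctly guess that the part outside $A$ should avoid $\loc(e)$.

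The second half, however, is never carried out, and the count you sketch cannot close the argument. Bounding the part of $s$ outside $A$ and the part of $s$ inside $A$ by $d-1$ each only gives $|s|\le 2d-2$. That is compatible with the hypothesis $|s|\ge 2d-2$, and ``strictness at one end'' does not say where an extra unit would come from. You also have not arranged for the events of $s$ outside $A$ to form a factor of a linearisation: lower-level events not above $e$ may sit among them. So ``avoids $\loc(e)$-dependent letters in a suitable window'' is not yet an application of fairness. Finally, the step you plan to try next, showing $\max_p(t)\in A$, is just the instance $p'=p$ of the claim and is not needed. The process avoided by the non-$A$ part is any process of $\loc(e)$, not $p$.

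Two ingredients are missing. First, use the whole set $M$ of events incomparable with $e$, not just its intersection with $s$. Since $\ideal{e}\setminus\{e\}$ and $t\setminus A$ are ideals, the sequence $(\ideal{e}\setminus\{e\})\,M\,A$ is a linearisation of $t$, so $M$ is a factor. Incomparable events of a trace carry independent letters, so $M$ contains no event of any process in $\loc(e)$, whence $|M|\le d-1$.

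Second, count $e$ itself. Every event of level greater than $k$ lies in $M\cup A$, since an event strictly below $e$ has level smaller than $k$. In particular this holds for the at least $2d-2$ events of $\view_p(t)$ in steps $i$ onwards, whose levels in $t$ are the same. Moreover, $e$ belongs to $A$ without being one of these events, because its level is $k<i$. This gives $2d-1\le |M|+|A|\le 2d-2$, the required contradiction.

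With these two points your strategy becomes a complete proof. The paper only imports this lemma from earlier work, so there is no in-paper proof to compare against.
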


We introduce some terminology: an event $e$ of a trace $t$ is said to be \emph{Second-Order Common Knowledge} ($\sock${}) for process $p$ if $p$ knows that $e$ is present in the current view of every other process; formally, $e$ is present in $\view_{p'}(\view_p(t))$ for all processes $p'$.
Therefore, a process can remove from its view every Foata step in which all events are $\sock${} for it:
Lemma~\ref{lem:everyoneKnowsBeyond2KFoataComponents} ensures that the remaining suffix is bounded linearly w.r.t. $d$.
When processes synchronise, they need to be aware of which of them has cut the most, and then cut their prefixes up to this point, and then finally do the Foata union of the rest of the view. To implement this, local states of a process $p$ in the AA are of the form $(c, q, \varphi)$ where
\begin{itemize}
\item $c \in \N$ is a counter representing the number of letters in the view that have been cut,
\item $q$ is a state of the DFA reached after the prefix of the view that has been cut, and 
\item $\varphi = S_1 S_2 \cdots S_m$ is a suffix of the view that remains, maintained in Foata normal form. 
Moreover, there is at least one letter $a\in S_1$ which is not $\sock$ for $p$: in other words, $p$ does not know whether $a$ is present in the current view of some other process $p'$.
\footnote{This is an optimised version of the algorithm studied in \cite[Section~6]{BerMon26}: the original algorithm proposes to simply cut a prefix of a certain length that ensures storing a short suffix.}
\end{itemize}

To maintain this invariant, a cut operation $\cut(c, q, \varphi)$ is defined on local states, as follows: if $\varphi = S_1S_2 \cdots S_m$, we compute if all the events of $S_1$ are $\sock$ for $p$; if it is the case we compute $\cut(c+|S_1|, q', S_2 \cdots S_m)$ with $q'$ the state such that $q\xrightarrow {S_1} q'$; otherwise we let $\cut(c, q, \varphi) = (c, q, \varphi)$. 
Notice that the only infinite component now is the counter $c$. In fair systems the views of two processes diverge by at most $d$, which can be used to do a modulo counting with $2d$ values for the counter (see \cite[Section~4]{BerMon26}): 

\begin{lemma}[\cite{BerMon26}]
    \label{lem:pairOfProcsAtmostK-1Apart}
    For all $d$-fair traces $t$, and $p, p' \in \proc$, $\big|\length{\view_p(t)} - \length{\view_{p'}(t)}\big| \leq d-1$.
\end{lemma}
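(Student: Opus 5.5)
Fix a $d$-fair trace $t$ and two processes $p, p'$, and assume without loss of generality that $|\view_p(t)| \geq |\view_{p'}(t)|$. I would show that $\view_p(t)$ contains at most $d-1$ events that are not in $\view_{p'}(t)$. This is enough because $\view_{p'}(t)\cap\view_p(t)$ is an ideal of $t$ contained in $\view_{p'}(t)$, so
\[
|\view_p(t)| - |\view_{p'}(t)| \leq |\view_p(t) \setminus \view_{p'}(t)|.
\]

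Let $E = \view_p(t)\setminus\view_{p'}(t)$. The key observation is that no event of $E$ involves $p'$. Indeed, suppose $e \in E$ with $p' \in \loc(\lambda(e))$. Then $e \leq \max_{p'}(t)$, because all $p'$-events are totally ordered and $\max_{p'}(t)$ is the largest of them. Hence $e \in \view_{p'}(t)$, a contradiction. Also, $E$ is upward closed inside $\view_p(t)$: if $e \in E$ and $e \leq f \in \view_p(t)$, then $f \notin \view_{p'}(t)$, since otherwise $e \in \view_{p'}(t)$ because views are ideals.

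Now choose a linearisation of $t$ that starts with a linearisation of the ideal $\view_p(t)\cap\view_{p'}(t)$, continues with a linearisation of $E$, and ends with the remaining events. This is a legitimate linearisation: $\view_p(t)\cap\view_{p'}(t)$ is an ideal, and its union with $E$ is $\view_p(t)$, which is also an ideal. In this linearisation the events of $E$ form a contiguous factor $u$ with $p' \notin \loc(u)$. If $|u| \geq d$, then $d$-fairness gives $\loc(u) = P \ni p'$, a contradiction. Therefore $|E| \leq d-1$, which proves the lemma.

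The one delicate step is checking that the ordering prefix, then $E$, then the rest is a genuine linearisation of $t$. This follows from the ideal properties above: a sequence of nested ideals, each linearised in turn, concatenates into a valid linearisation.
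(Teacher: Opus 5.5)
Your proof is correct. The paper itself gives no proof of this lemma (it is imported from \cite{BerMon26}), so there is no in-paper argument to compare against. Both of your ingredients are sound. First, every $p'$-event lies below $\max_{p'}(t)$, because $p'$-events are pairwise dependent and hence totally ordered, so $E=\view_p(t)\setminus\view_{p'}(t)$ contains no $p'$-event. Second, listing $\view_p(t)\cap\view_{p'}(t)$, then $E$, then the remaining events gives a linear extension of $\leq$, because the first block and the union of the first two blocks are both ideals.

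A slightly shorter route applies your key observation to the whole complement $t\setminus\view_{p'}(t)$ rather than to $E$, and it proves a stronger statement. This complement contains no $p'$-event, and as the complement of the ideal $\view_{p'}(t)$ it is a suffix $v$ of some linearisation $uv$ of $t$. So $d$-fairness forces $|v|\leq d-1$, that is, $|t|-|\view_{p'}(t)|\leq d-1$. Since $|\view_p(t)|\leq|t|$, the lemma follows at once. This version needs only a two-block linearisation and no intermediate inequality through $|E|$, and it shows that every view length lies in the window $\{|t|-d+1,\dots,|t|\}$. Your version, in exchange, bounds directly how much $p$ knows that $p'$ does not.
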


\subparagraph*{Fair systems: synchronizing and expanding views} When a new letter appears, the processes participating in this letter reconcile their views by cutting their prefixes up to the process which has cut the most, and then performing a Foata union. More formally, for a letter $a$ with $\loc(a) = X$ and $(c_p, q_p, \varphi_p)$ being the local states of processes $p \in X$, we define $\synchronise((c_p, q_p, \varphi_p)_{p \in X}) = (c', q', \varphi')$  as follows:
\begin{itemize}
\item $c' = \max_{p \in X} (c_p)$; choose a process $p_{\max} \in X$ which maintains the largest counter value $c'$ and let $q' = q_{p_{\max}}$;
\item for each process $p \in X$, if $\varphi_p = S_1 S_2 \dots S_m$, let $j$ be the index such that $S_1 \dots S_{j-1}$ contains exactly $c' - c_p$ letters (existence of such an index is ensured by the construction, and these letters were the ones that were previously cut by $p_{max}$). We let $\varphi'_p$ to be $S_{j} \dots S_m$. Now, define $\varphi' = \bigcup_{p \in X} \varphi'_p$. 
\end{itemize}
Then, we define $\expand((c', q', \varphi'), a) = \varphi' \cdot \{a\}$. Finally, we perform a cut operation on the resulting state: $\cut(c', q', \varphi' \cdot \{a\} )$. This is the final local state of each process $p \in X$ after reading the new letter $a$. In summary, the new state of each $p \in X$ after reading $a$ is:
\begin{align*}
\cut(\expand(\synchronise((c_p, q_p, \varphi_p)_{p \in X}), a))
\end{align*}

\subparagraph*{Fair systems: determining the accepting states} From the local states reached on a trace~$t$, by applying Lemma~\ref{lem:viewFoata}, it is possible to determine the state of $\A$ on reading the global trace $t$. This way, one can determine which combination of local states is accepting.

\begin{theorem}[\cite{BerMon26}]\label{thm:fair-AA}
 Let $\A$ be a $d$-fair DFA over a distributed alphabet $(\Sigma, \loc)$, satisfying the diamond property. There exists a finite AA $\B$ over $(\Sigma, \loc)$ such that $\Lt(\A)=\Lt(\B)$, where every set of local states (for each process) is of size bounded by $O(n \times d\times |\Sigma|^{3d-3})$ with $n$ being the number of states of $\A$.
\end{theorem}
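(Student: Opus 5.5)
We prove Theorem~\ref{thm:fair-AA} by making the construction recalled above finite and checking, by induction on the length of a linearisation, that the local states track the (cut) views. Concretely, on reading a trace $t$, the local state of each process $p$ is $(c_p \bmod 2d, q_p, \varphi_p)$. Here $c_p$ letters form a prefix of $\view_p(t)$ in Foata form, $q_0\xrightarrow{\text{prefix}} q_p$, and $\varphi_p$ is the remaining suffix of $\Foata(\view_p(t))$. The cut prefix always consists of full Foata steps of $\view_p(t)$.

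\textbf{Correctness with the unbounded counter.} For the induction step on a letter $a$ with $\loc(a)=X$, we must show that $\synchronise$ is well defined. Every letter cut by $p_{\max}$ belongs to complete steps that were $\sock$ for $p_{\max}$ at cutting time, hence lies in $\view_{p}$ for every $p\in X$. By monotonicity of Foata normal forms under ideals, these steps are exactly the first steps of $\Foata(\view_p(t))$. So the index $j$ with $c'-c_p$ letters exists, and the truncated $\varphi'_p$ are aligned suffixes. Lemma~\ref{lem:viewFoata} then gives that $\bigcup_p\varphi'_p$ is $\Foata(\view_X(t))$ with the common prefix removed. Appending $\{a\}$ yields the suffix of $\view_p(ta)$, and $q'=q_{p_{\max}}$ is the correct DFA state after the cut prefix. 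Cutting further $\sock$ steps preserves the invariant. For acceptance, combine all processes: resynchronise to the maximal counter, take the stepwise union to get $\Foata(t)$ minus a prefix (since $t=\view_P(t)$), and run $\A$ from $q'$ on the rest. This determines membership, so $F$ is definable from the local states.

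\textbf{Finiteness.} The key bound is on $|\varphi_p|$. After $\cut$, the first step contains a non-$\sock$ letter for $p$. We apply Lemma~\ref{lem:everyoneKnowsBeyond2KFoataComponents} to the trace $\view_p(t)$, which is $d$-fair as an ideal of a $d$-fair trace. Every step before the index $i$ defined there is known to all processes within $\view_p(t)$, i.e.\ it is $\sock$ for $p$, and so gets cut. Hence $\varphi_p$ has fewer than $2d-2$ letters plus one extra step, at most $|\Sigma|$ letters, giving at most about $3d-3$ letters. The number of such Foata sequences is therefore $O(|\Sigma|^{3d-3})$.

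\textbf{Bounding the counter.} Next we replace $c_p$ by its value modulo $2d$. In $\synchronise$ we need to compute $\max_p c_p$ and the differences $c'-c_p$. Each $c_p + |\varphi_p| = |\view_p(t)|$, and by Lemma~\ref{lem:pairOfProcsAtmostK-1Apart} these quantities differ by at most $d-1$ across processes. Since the $|\varphi_p|$ are bounded and known, the true difference $c_p-c_{p'}$ lies in a window of width less than $2d$, so it is recoverable from residues modulo $2d$. Making this window argument tight is the main obstacle: it must combine the bounded suffix length with Lemma~\ref{lem:pairOfProcsAtmostK-1Apart} so that the difference stays within $(-d,d)$. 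The plan is to argue directly that cut prefixes differ by less than $d$, because any letter cut by one process is in every view and the views differ by fewer than $d$ letters. Counting local states then gives $2d\cdot n\cdot O(|\Sigma|^{3d-3}) = O(n\times d\times|\Sigma|^{3d-3})$.
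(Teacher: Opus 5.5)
Your route is the one the paper recalls from \cite{BerMon26}. The paper does not reprove the theorem; it only sketches this construction: $\sock$-based cuts, Lemma~\ref{lem:everyoneKnowsBeyond2KFoataComponents} for the suffix, Lemma~\ref{lem:pairOfProcsAtmostK-1Apart} for counters modulo $2d$, and Lemma~\ref{lem:viewFoata} for acceptance. However, the size bound, which is the real content of the statement, does not follow from what you wrote.

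After cutting, $\varphi_p$ lies within steps $i,\dots,m$ of $\Foata(\view_p(t))$, and steps $i+1,\dots,m$ carry at most $2d-3$ letters. Bounding step $i$ by $|\Sigma|$ gives only $2d-3+|\Sigma|$ letters, hence $|\Sigma|^{2d-3+|\Sigma|}$ possible suffixes, not $|\Sigma|^{3d-3}$. The missing fact is that every Foata step of a $d$-fair trace has at most $d$ letters. The letters of a step are pairwise independent and occur consecutively in the linearisation read off the FNF. If there were $d+1$ of them, some $d$ of them would already cover $P$, and the remaining one would share a process with one of these, contradicting independence. With this, $|\varphi_p|\le (2d-3)+d=3d-3$.

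Your final paragraph misjudges where the difficulty lies. The window argument you state first is already complete: $c_p-c_{p'}=(|\view_p(t)|-|\view_{p'}(t)|)-(|\varphi_p|-|\varphi_{p'}|)$. The second term is read off the two local states, and the first lies in $[-(d-1),d-1]$. So $c_p-c_{p'}$ ranges over a known block of $2d-1$ consecutive integers and is determined by its residue modulo $2d$. Nothing needs to be made tighter. The alternative you sketch also does not work as argued: the $c_p-c_{p'}$ letters cut by $p$ but not by $p'$ all lie inside $\varphi_{p'}$. They therefore contribute nothing to $|\view_p(t)|-|\view_{p'}(t)|$ and are bounded only by $|\varphi_{p'}|$.

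Two smaller points. First, in the alignment step, knowing that the cut letters lie in every view does not by itself make them exactly the first steps of $\Foata(\view_p(t))$. You need the second-order part of $\sock$: any $r$-event of $\view_p(t)$ outside $\view_{p_{\max}}(t)$ lies above the $r$-event of $\view_{p_{\max}}(t)$ witnessing $\sock$, hence in a strictly later step. Second, $\B$ reads arbitrary traces, not only $d$-fair ones. So the fairness lemmas must be applied to views that are runs of $\A$, sending a process to a rejecting sink once its own view leaves $\A$. Note that $|\view_{p'}(t)\setminus\view_p(t)|\le d-1$ only requires $\view_{p'}(t)$ to be $d$-fair, since these events form a $p$-free suffix of one of its linearisations.
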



\begin{figure}[tbp]
    \centering
    \scalebox{0.9}{
        \begin{minipage}[c]{0.2\textwidth}
            \centering
            \begin{tikzpicture}[state/.style={rectangle, rounded corners, fill=green!30, inner sep=4pt, draw=gray, thick}, baseline=(current bounding box.center)]
                \begin{scope}[every node/.style={state}]
                    \node (0) at (0,0) {\scriptsize $q_0$};
                    \node (1) at (1.5, 0) {\scriptsize $q_1$};
                    \node (2) at (0, -1.5) {\scriptsize $q_2$};
                    \node (3) at (1.5, -1.5) {\scriptsize $q_3$};
                \end{scope}
                \begin{scope}[->, >=stealth]
                    \draw (-0.7,0) to (0);
                    \draw (0) to node [above] {\scriptsize $12$} (1);
                    \draw (0) to node [left] {\scriptsize $34$} (2);
                    \draw (1) to node [right] {\scriptsize $34$} (3);
                    \draw (2) to node [below] {\scriptsize $12$} (3);
                    \draw (3) to node [right, near end] {\scriptsize $235$} (0);
                \end{scope}
            \end{tikzpicture}
        \end{minipage}%
        \hspace{1.5cm}
        \begin{minipage}[c]{0.7\textwidth}
            \centering
                \renewcommand{\arraystretch}{1.3}
                \arrayrulecolor{gray}
                {\scriptsize 
            \begin{tabular}{l|l|l|l|l|l} 
                &  $12$ & $34$ & $235$ & $34$ & $12$ \\
                \hline
                $p_1$ & $\{\textcolor{blue}{12}\}$ & $\{12\}$ & $\{12\}$ & $\{12\}$ &  $\{12, \blue{34}\} \{\blue{235}\} \{\blue{12}\}$\\
                \hline 
                $p_2$ & $\{\textcolor{blue}{12}\}$ & $\{12\}$ & $\{ 12, \textcolor{blue}{34}\}\{\textcolor{blue}{235}\}$ & $\{12, 34\} \{235\}$ & $\{12, 34\} \{235\} \{\blue{12}\}$ \\
                \hline 
                $p_3$ &  & $\{\textcolor{blue}{34}\}$ & $\{\textcolor{blue}{12}, 34\}\{\textcolor{blue}{235}\}$ & $\{12, 34\} \{235\} \{\blue{34}\}$ & $\{12, 34\} \{235\} \{34\}$\\
                \hline
                $p_4$ &  & $\{\textcolor{blue}{34}\}$ & $\{34\}$ & $\{\blue{12}, 34\} \{\blue{235}\} \{\blue{34}\}$ & $\{12, 34\} \{235\} \{34\} $\\
                \hline 
                $p_5$ &  &  & $\{\textcolor{blue}{12,34}\} \{\textcolor{blue}{235}\}$ & $\{12,34\} \{235\}$ & $\{12, 34\} \{235\}$
            \end{tabular}
                }
        \end{minipage}}
\caption{Construction for fair DFAs}
\label{fig:fair-example}
\end{figure}

\begin{example}\label{ex:fair-construction}
    We illustrate this construction for the DFA on the left of Figure~\ref{fig:fair-example} with the alphabet and processes described in Example~\ref{ex:distributed-alphabet}. For now, we consider all states to be accepting. This DFA satisfies the diamond property and is fair since all processes participate in all cycles. Moreover, we can set the fairness parameter $d$ to $4$: we cannot set it to $3$ because of the path $q_1 \xrightarrow{34} q_3 \xrightarrow{235} q_0 \xrightarrow{34} q_2$ where process $p_1$ is not participating. On the right of Figure~\ref{fig:fair-example}, we read one by one the letters $12$, $34$, $235$, $34$ and $12$. The counter and state parts of the local states remain $0$ and $q_0$ and are thus not drawn. We put in blue the new information that is added during the synchronisation and expansion. No Foata steps are cut so far: for instance, in the final state reached after reading the letters, the letter $12$ in the first step of the local state of $p_1$, is not $\sock$ for it due to process $p_4$: the view of $p_4$ within the view of $p_1$ is simply $\{34\}$. On the other hand, notice that $p_4$ indeed knows about this event because of the second event $34$ that happens concurrently. We now read letter $235$ (we put back the counter and state parts of the local states):
    \begin{center}\renewcommand{\arraystretch}{1.3}
    \arrayrulecolor{gray}\scriptsize 
    \begin{tabular}{l| l}
    &  $235$ \\
    \hline
    $p_1$ & $(0, q_0, \{12, 34\} \{235\} \{12\})$ \\
    \hline
    $p_2$ &  $(0, q_0, \{12, 34\} \{235\} \{12, \blue{34}\} \{\blue{235}\})~\to~(\blue{3}, \blue{q_0}, \{12, 34\} \{235\})$\\
    \hline
    $p_3$ & $(0, q_0, \{12, 34\} \{235\} \{\blue{12}, 34\} \{\blue{235}\})~\to~ (\blue{3}, \blue{q_0}, \{12, 34\} \{235\})$ \\
    \hline
    $p_4$ &  $(0, q_0, \{12, 34\} \{235\} \{34\})$ \\
    \hline
    $p_5$  & $(0, q_0, \{12, 34\} \{235\} \{\blue{12}, \blue{34}\} \{\blue{235}\})~\to~ (\blue{3}, \blue{q_0}, \{12, 34\} \{235\})$
    \end{tabular}
    \end{center}
    The three processes, once synchronised and after expansion can cut the three first steps since all their events are $\sock$ for them. 
    We thus obtain the cut local states on the right. The counter is incremented by $3$ (the number of cut letters) and the new state is the one reached in $\A$ from $q_0$ by reading $12$, $34$, and $235$.
\end{example}


\section{Connectedly Communicating Processes}
\label{sec:ccp-basic}

We now extend the algorithm of Section~\ref{sec:fair-recall}, 
to allow more DFAs. Consider the DFA in Figure~\ref{fig:ccp-automaton}. This DFA is not fair: in the cycle $q_4 \xra{35} q_5 \xra{13} q_4$, processes $2$ and $4$ do not participate. However, in all the paths starting from $q_4$ or $q_5$, processes $2$ and $4$ do not communicate with $1, 3, 5$, neither directly nor indirectly through a sequence of intermediate synchronisations. Such systems have been called \emph{connectedly communicating processes} in~\cite{MadThi05}. 
This notion was originally defined on an asynchronous automaton. Here, we adapt it to the case of DFAs.  


Processes $p$ and $p'$ are said to be \emph{separated} in a trace $t$ over $(\Sigma, \loc)$ if there exists a linearisation $uu'$ of $t$ such that (1) $\loc(u) \cap\loc(u')=\emptyset$ ($u$ and $u'$ are independent) and (2) $|u|_{p'} =  |u'|_p = 0$ ($p'$ does not occur in $u$, and $p$ does not occur in $u'$). If such a linearisation does not exist for $t$, we say that $p$ and $p'$ are \emph{potentially communicating} in $t$. For a word $w \in \Sigma^*$, we say $p$ and $p'$ are separated in $w$ if they are separated in the trace corresponding to $w$. This definition of separation allows one to semantically define the notion of languages over connectedly communicating processes: intuitively, two processes that are potentially communicating (i.e., not separated yet) must learn about each other often enough. 

\begin{definition}
  A trace language $L$ over a distributed alphabet is \emph{connectedly communicating} if there exists $k \ge 0$ such that for all traces $t \in L$, all processes $p, p'$ and linearisations $uvw$ of $t$, if $|v|_p \ge k$ and $|v|_{p'} = 0$, then $p$ and $p'$ are separated in $w$.
  \label{def:k-communicating-langs}
\end{definition}
In~\cite{MadThi05}, a system is called a $k$-CCP if it satisfies the above definition with the value taken to be $k$. In our work, we consider an alternate characterization and a different parameter which we will describe next. 



\subparagraph*{Testing connected communication} We formalize the potential communication between processes as an equivalence relation at each state of the DFA. This relation turns out to be identical for all states within a strongly connected component (SCC): an SCC is a subset of states such that there is a path between every pair of states, and moreover it is maximal in the sense that no strict superset of states satisfies this property. This gives a notion of communicating partners for a process at each SCC. 


\begin{definition}\label{def:potential-communication-relation}
 Let $\A$ be a DFA and let $s$ be a state in $\A$. We define an equivalence relation $\equiv_s$ over the set of processes, called the \emph{potential communication} relation at $s$, as the reflexive transitive closure of the relation $\equiv'_s$ defined as follows: $p \equiv'_s p'$ if there exists a path labelled $w$ in $\A$ starting from $s$ such that $p$ and $p'$ are potentially communicating. We write $[p]_s$ for the equivalence class of process $p$ at state $s$ w.r.t.~$\equiv_s$.

 We define \emph{separation depth} of $\A$ to be the maximum number of times a refinement of the potential communication relation can happen in a path of $\A$.
\end{definition}

Notice that if $s \xra{a} s'$ is a transition of $\A$, then $\equiv_{s'}$ is a refinement of $\equiv_s$: if $p$ and $p'$ are potentially communicating in $s'$, then they potentially communicate from $s$ as well. This observation helps to prove the following lemma about states within the same SCC: 

\begin{restatable}{lemma}{PCSameInSCC}
    \label{lem:potential-communication-same-in-SCC}
    Let $s, s'$ be two states which are in the same SCC of $\A$. Then the equivalence relations $\equiv_s$ and $\equiv_{s'}$ coincide.
\end{restatable}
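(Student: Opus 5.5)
The plan is to establish the observation stated just before the lemma as a standalone claim, and then derive the lemma by a two-sided inclusion argument around a cycle.

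\emph{Claim.} If $s \xra{a} s'$ is a transition of $\A$, then $\equiv_{s'} \incl \equiv_s$.

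Since $\equiv_{s'}$ is the reflexive transitive closure of $\equiv'_{s'}$, and $\equiv_s$ is reflexive and transitive, it suffices to show $\equiv'_{s'} \incl \equiv_s$. Suppose $p \equiv'_{s'} p'$, witnessed by a path labelled $w$ from $s'$ in which $p$ and $p'$ are potentially communicating. Then $aw$ labels a path from $s$. I will argue that $p$ and $p'$ are still potentially communicating in $aw$, by contraposition.

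Assume $p$ and $p'$ are separated in the trace of $aw$, via a linearisation $uu'$ with $\loc(u)\cap\loc(u')=\emptyset$, $|u|_{p'}=0$ and $|u'|_p=0$. The event labelled $a$ is minimal in the trace of $aw$, so it lies in $u$ or in $u'$. Removing it from whichever factor contains it gives words $v,v'$ such that $vv'$ is a linearisation of the trace of $w$. This uses that the trace of $w$ is obtained from that of $aw$ by deleting a minimal event. The pair $v,v'$ still satisfies both separation conditions, because deleting a letter can only shrink $\loc$ and the occurrence counts. Hence $p$ and $p'$ are separated in $w$, a contradiction. Therefore $p \equiv'_s p'$, and the claim follows.

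By induction on path length, the claim gives $\equiv_{s'} \incl \equiv_s$ whenever there is a path from $s$ to $s'$.

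For the lemma, take $s,s'$ in the same SCC. There are paths from $s$ to $s'$ and from $s'$ to $s$, so we get both $\equiv_{s'} \incl \equiv_s$ and $\equiv_s \incl \equiv_{s'}$, and the relations coincide.

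The only delicate step is the trace-level manipulation inside the claim: checking that deleting the minimal event $a$ from a separating linearisation of $aw$ yields a separating linearisation of $w$. I expect this to be routine. An equivalent way to phrase it is that separation is preserved under taking suffixes of traces, since any factorisation of the trace of $aw$ restricts to one of the trace of $w$.
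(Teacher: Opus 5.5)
Your proof is correct and takes essentially the same route as the paper. Both arguments rest on the fact that potential communication in a word $w$ read from a state survives prefixing $w$ with the label of a path leading into that state, applied to paths between $s$ and $s'$. The paper prepends the whole connecting word at once and concludes by symmetry, and it uses this preservation fact without proof; your letter-by-letter deletion argument supplies that justification.
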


Thanks to the above lemma, for each SCC $\Gamma$ of $\A$, we can define $\equiv_\Gamma$ to be equal to $\equiv_s$ for some $s \in \Gamma$. The relation $\equiv_\Gamma$ will be called the potential communication relation for $\Gamma$. Figure~\ref{fig:ccp-automaton} shows this relation for the depicted DFA. Let us reconsider the separation depth of Definition~\ref{def:potential-communication-relation}. Since the potential communication relation is the same within an SCC, the number of times it can get refined is at most the height of a topological ordering of the SCCs (this is an ordering where $\Gamma$ is less than $\Gamma'$ if there is a path from a state in $\Gamma$ to a state in $\Gamma'$). It can be smaller, since the relation may stay the same in several SCCs. If $\A$ is fair, then the separation depth is equal to $0$. Moreover, we can compute these equivalence relations in polynomial time by visiting the SCCs in reverse topological order. 

The next lemma provides an alternate characterization for a DFA to be connectedly communicating, by making use of the potential communication relation. This characterization allows one to test whether a given DFA is connectedly communicating.

\begin{restatable}{lemma}{CCfair}
    \label{lem:CC-fair}
    A trim DFA $\A$ satisfying the diamond property is connectedly communicating if and only if for every SCC $\Gamma$, and every equivalence class $X\subseteq P$ of $\equiv_{\Gamma}$, the SCC $\Gamma$ is fair with respect to $X$, i.e.~whenever some process of $X$ participates in a cycle of $\Gamma$, all processes of $X$ participate in it as well. 
\end{restatable}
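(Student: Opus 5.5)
We prove the two directions separately, with the language of $\A$ viewed as a trace language (the diamond property guarantees this). Trimness gives two facts we use throughout. Every state $s$ is reachable from $q_0$, say by a word $u_0$, and co-reachable, say by a word $w_0$ to a final state. Hence every path of $\A$ extends to an accepted word.

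\emph{($\Leftarrow$) Fairness of each SCC w.r.t.\ its classes implies connected communication.}
Let $N$ be the number of states and $h$ the number of SCCs, and take $k = h\cdot N + 1$ (any bound of this form suffices). Let $t \in L$ with linearisation $uvw$, $|v|_p \ge k$ and $|v|_{p'}=0$, and look at the accepting run.

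First we locate a long stretch inside one SCC. The run on $v$ visits at most $h$ SCCs, and each maximal stretch inside an SCC $\Gamma$ not containing a $p$-letter has length below $N$. Otherwise some cycle of $\Gamma$ would avoid $p$ while, $\Gamma$ being visited with $p$ active, some cycle of $\Gamma$ contains $p$. (We handle this pigeonhole more carefully when writing it out.) Choosing $k$ large enough, we therefore find a cycle inside a single SCC $\Gamma$ on which $p$ participates but $p'$ does not. Concretely, a factor of $v$ of length at least $N$ with all its states in $\Gamma$, containing $p$ and not $p'$, yields a repeated state and hence such a cycle.

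Fairness of $\Gamma$ w.r.t.\ $[p]_\Gamma$ then gives $p' \notin [p]_\Gamma$. Let $s$ be the state reached before $w$. It lies in $\Gamma$ or in a later SCC, so $\equiv_s$ refines $\equiv_\Gamma$ and $p \not\equiv_s p'$. In particular $p$ and $p'$ are not potentially communicating along the path labelled $w$ from $s$, i.e.\ they are separated in $w$.

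\emph{($\Rightarrow$) Connected communication implies fairness.}
We argue by contraposition. Suppose some SCC $\Gamma$, class $X$ of $\equiv_\Gamma$, and cycle $c$ at a state $s\in\Gamma$ are such that $p \in X$ participates in $c$ while $p' \in X$ does not.

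Since $p \equiv_s p'$, there is a chain $p=p_0,p_1,\dots,p_m=p'$ with witnessing paths $w_i$ from $s$ (using Lemma~\ref{lem:potential-communication-same-in-SCC}, all of them start at the same $s$) on which $p_{i-1}$ and $p_i$ potentially communicate.

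The first step is the case $m=1$. We form the accepted word $u_0\, c^k\, w_1\, w_0$. The factor $v = c^k$ has $|v|_p\ge k$ and $|v|_{p'}=0$, and $p,p'$ are potentially communicating in $w_1 w_0$. This holds because potential communication is preserved under extension: if a split $xy$ of $w_1w_0$ separated them, restricting to the prefix $w_1$ would separate them in $w_1$. This contradicts the definition for every $k$.

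The main obstacle is the case $m>1$: a single cycle $c$ may contain $p$ without containing any intermediate $p_i$. We first try to reduce it by reselecting the pair. Among the adjacent pairs $(p_{i-1},p_i)$ there is one with $p_{i-1}\in\loc(c)$ and $p_i\notin\loc(c)$, since $p_0 \in \loc(c)$ and $p_m \notin \loc(c)$. Applying the $m=1$ argument to that pair, with $c^k$ followed by $w_i w_0$, yields the same contradiction. This reduction makes the obstacle disappear; the remaining care is ensuring that returning to $s$ after $c^k$ is legitimate, which holds since $c$ is a cycle at $s$.
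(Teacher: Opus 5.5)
Your $(\Rightarrow)$ direction is correct, and it is more careful than the paper's. The paper's write-up simply assumes a single witness path from $s$ on which $p$ and $p'$ potentially communicate. Your choice of an adjacent pair $(p_{i-1},p_i)$ in the chain with $p_{i-1}\in\loc(c)$ and $p_i\notin\loc(c)$ deals properly with the transitive closure. Your remark that separation in $w_1w_0$ restricts to separation in the prefix $w_1$ also justifies a step the paper leaves implicit. The end of your $(\Leftarrow)$ direction matches the paper: fairness gives $p'\notin[p]_\Gamma$, potential communication only refines along the run, and so $p,p'$ are separated in $w$.

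\textbf{The gap} is at the start of $(\Leftarrow)$, where you extract a cycle on which $p$ participates and $p'$ does not. Your claim that every maximal $p$-free stretch inside an SCC has length below $N$ is false. Fairness with respect to the classes only constrains which processes must accompany $p$ on a cycle; it never forces $p$ to occur. For instance, the DFA of Figure~\ref{fig:ccp-automaton} satisfies the hypothesis, yet its SCC $\{q_6\}$ has self-loops $2$ and $4$. So $2\,4^m\,2$ stays inside that SCC with an arbitrarily long $p_2$-free stretch.

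Your ``concrete'' version fails as well. A factor of length at least $N$ inside $\Gamma$ that contains a $p$-letter does have a repeated state, but the cycle between the repetitions need not contain any $p$-letter. For example, take a factor $x\,y^N$ in which only $x$ involves $p$ and the source state of $x$ is never revisited: every repetition then lies inside $y^N$.

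The fix is to apply the pigeonhole to occurrences of $p$, not to length. Since $|v|_p\ge N$, take the $N+1$ states of the run on $v$ just before each of the first $N$ letters involving $p$, and just after the $N$-th such letter. Two of these states coincide, which gives $r\xrightarrow{v_2}r$ with $v_2$ a factor of $v$, $|v_2|_p>0$ and $|v_2|_{p'}=0$. A cycle automatically lies inside one SCC, so the bookkeeping over the $h$ SCCs is unnecessary and $k=N$ suffices. This is exactly the paper's argument, and with it the rest of your proof goes through.
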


We are now in a position to define a counterpart of the fairness parameter in the context of connectedly communicating DFAs. Consider a DFA which is connectedly communicating. For an SCC $\Gamma$ of $\A$, and an equivalence class $X$ of $\equiv_\Gamma$, the fairness parameter $d_{\Gamma, X}$ is the minimum number such that in every path $\sigma$  of length $\ge d_{\Gamma, X}$ inside $\Gamma$ that consists only of $X$ actions (i.e.~actions of $\bigcup_{p\in X} \Sigma_p$), we have $\loc(\sigma) = X$. 
The \emph{maximum fairness parameter}~$d$ is then defined to be the maximum over $d_{\Gamma, X}$ for all $\Gamma$ and all equivalence classes $X$ of $\Gamma$.

\begin{figure}
\centering
\scalebox{0.9}{
\begin{minipage}[c]{.45\textwidth}
    \centering
\begin{tikzpicture}[greenstate/.style={rectangle, rounded corners, fill=green!30, inner sep=4pt, draw=gray, thick}, redstate/.style={rectangle, rounded corners, fill=red!30, inner sep=4pt, draw=gray, thick}, graystate/.style={rectangle, rounded corners, fill=gray!30, inner sep=4pt, draw=gray, thick}, baseline=(current bounding box.center), yellowstate/.style={rectangle, rounded corners, fill=yellow!30, inner sep=4pt, draw=gray, thick}, baseline=(current bounding box.center)]
\begin{scope}[every node/.style={greenstate}]
    \node (0) at (0,0) {\scriptsize $q_0$};
    \node (1) at (1.5, 0) {\scriptsize $q_1$};  
    \node (2) at (0, -1.5) {\scriptsize $q_2$};
    \node (3) at (1.5, -1.5) {\scriptsize $q_3$};       
\end{scope}
\begin{scope}[every node/.style={redstate}]
 \node (4) at (3, -1.5) {\scriptsize $q_4$};
\node (5) at (4.5, -1.5) {\scriptsize $q_5$};
\end{scope}
\begin{scope}[every node/.style={graystate}]
\node (6) at (1.5, -3 ) {\scriptsize $q_6$};
    \node[double] (7) at (3, -3) {\scriptsize $q_7$};
    \node (8) at (4.5, -3) {\scriptsize $q_8$};
\end{scope}
\begin{scope}[->, >=stealth]
     \draw (-0.7,0) to (0);
    \draw (0) to node [above] {\scriptsize $12$} (1);
    \draw (0) to node [left] {\scriptsize $34$} (2);
    \draw (1) to node [right] {\scriptsize $34$} (3);
    \draw (2) to node [below] {\scriptsize $12$} (3);
    \draw (3) to node [right, near end] {\scriptsize $235$} (0);
    \draw (3) to node [above] {\scriptsize $13$} (4);
    \draw (3) to node [left] {\scriptsize $24$} (6);
    \draw (4) to [bend left=20] node [above] {\scriptsize $35$} (5);
    \draw (5) to [bend left=20] node [below] {\scriptsize $13$} (4);
    \draw (4) to node [left] {\scriptsize $24$} (7);
    \draw (5) to node [right] {\scriptsize $24$} (8);
    \draw (6) to node [above] {\scriptsize $13$} (7);
    \draw (7) to [bend left=20] node [above] {\scriptsize $35$} (8);
    \draw (8) to [bend left=20] node [below] {\scriptsize $13$} (7);
    \draw (6) to [loop below] node [below] {\scriptsize $2,4$} (6);
    \draw (7) to [loop below] node [below] {\scriptsize $2,4$} (7);
    \draw (8) to [loop below] node [below] {\scriptsize $2,4$} (8);
\end{scope}

\draw[rounded corners,dotted] (-0.5,0.5) rectangle (2,-2);
\node() at (2.3,0) {$\Gamma_1$};
\draw[rounded corners,dotted] (2.5,-.95) rectangle (5,-2.05);
\node() at (4.25,-.75) {$\Gamma_2$};
\draw[rounded corners,dotted] (1,-2.45) rectangle (2,-4);
\node() at (.7,-3.4) {$\Gamma_3$};
\draw[rounded corners,dotted] (2.5,-2.45) rectangle (5,-4);
\node() at (5.3,-3.4) {$\Gamma_4$};
\end{tikzpicture}
\end{minipage}
\hspace{1cm} 
\begin{minipage}[c]{.4\textwidth}
    \begin{align*}
       &p_1 \equiv_{\Gamma_1} p_2 \equiv_{\Gamma_1} p_3 \equiv_{\Gamma_1} p_4 \equiv_{\Gamma_1} p_5 \\
       &p_1 \equiv_{\Gamma_2} p_3 \equiv_{\Gamma_2} p_5 \qquad p_2 \equiv_{\Gamma_2} p_4 \\ 
       &p_1 \equiv_{\Gamma_3} p_3 \equiv_{\Gamma_3} p_5 \\
       &p_1 \equiv_{\Gamma_4} p_3 \equiv_{\Gamma_4} p_5 
    \end{align*}
\end{minipage}}
\caption{On the left, a connectedly communicating DFA with four SCCs. On the right, the potential communication relation in each SCC (in $\Gamma_3$ and $\Gamma_4$, processes $2$ and $4$ are separated from the other processes).}
\label{fig:ccp-automaton}
\end{figure}

\begin{example}\label{ex:CC-DFA}
    Consider the DFA $\A'$ on the left of Figure~\ref{fig:ccp-automaton}, with $q_7$ being accepting, and working over the distributed alphabet of Example~\ref{ex:distributed-alphabet}. There are four SCCs depicted with dotted rectangles: for instance, 
    $\Gamma_1 = \{q_0, q_1, q_2, q_3\}$. 
    We have coloured the states according to the potential communication relations that are described on the right of Figure~\ref{fig:ccp-automaton}: the states of two SCCs have the same colours when they share the same potential communication relation. This is the case for $\Gamma_3$ and $\Gamma_4$. Even if processes $p_1$, $p_3$, and $p_5$ are not communicating inside~$\Gamma_3$, they are still potentially communicating because of $\Gamma_4$. 
    Notice that $\A'$ is not fair: for instance, in the reachable loop around $q_6$, with letter $2$, only process $p_2$ is participating. 
    
    The separation depth of $\A$ is $2$, since there exists a path going through $q_3$, $q_4$ and $q_7$, where the potential relation communication has been refined twice. The maximum fairness parameter is $4$ since the fairness parameter is $4$ in SCC $\Gamma_1$ (same reason as in Example~\ref{ex:fair-construction}), $2$ for the equivalence class $\{p_1, p_3, p_5\}$ in $\Gamma_2$ and $\Gamma_4$, and $1$ for the others.

\end{example}

\section{Construction for connectedly communicating specifications}
\label{sec:ccp-construction}
We are now ready to formally state our main contribution: 

\begin{theorem}\label{thm:CC-AA}
  For every connectedly communicating DFA $\A$ over a distributed alphabet $(\Sigma, \loc)$ satisfying the diamond property, 
  there exists a finite AA $\B$ over $(\Sigma, \loc)$ such that $\Lt(\A)=\Lt(\B)$, where every set of local states (for each process) is of size bounded by $O\big((n \times d\times |\Sigma|^{3d-3})^{s+1}\big)$ with $n$ the number of states of $\A$, $d$ the maximum fairness parameter of $\A$, and $s$ the separation depth of $\A$.
\end{theorem}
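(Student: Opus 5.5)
We will build $\B$ by generalising the fair construction of Section~\ref{sec:fair-recall}. A local state of process $p$ is no longer a single triple $(c,q,\varphi)$. Instead it is a sequence of at most $s+1$ \emph{phases} $(c_0,q_0,\varphi^0)\,(c_1,q_1,\varphi^1)\cdots(c_k,q_k,\varphi^k)$, one for each refinement of the potential communication relation seen so far along $p$'s view. Each phase records four things: the DFA state reached at the cut point, a modular counter, a bounded Foata suffix of the portion of the view belonging to that phase, and implicitly the SCC (hence the class $[p]_\Gamma$) governing it.

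The first step is to show that a process can detect phase changes locally. Given its view, $p$ can compute the DFA state reached on $\view_p(t)$ by combining the phase data with $\diam$. It can therefore determine the SCC $\Gamma$ reached on its view, and in particular the class $[p]_\Gamma$. When a new letter makes the view enter an SCC with a strictly finer relation, $p$ freezes the current phase and opens a new one. Along any path this happens at most $s$ times by definition of separation depth, which bounds the number of phases by $s+1$.

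The second step is the key combinatorial lemma, a phase-wise analogue of Lemmas~\ref{lem:everyoneKnowsBeyond2KFoataComponents} and~\ref{lem:pairOfProcsAtmostK-1Apart}. Consider the part of a trace read while the global state stays inside an SCC $\Gamma$, restricted to an equivalence class $X$ of $\equiv_\Gamma$. By Lemma~\ref{lem:CC-fair} and the definition of $d_{\Gamma,X}\le d$, this part is $d$-fair with respect to $X$. Moreover, since processes of $X$ never again communicate with processes outside $X$ (separation), the views of processes in $X$ within that phase depend only on $X$-events. So we can re-run the fair-system arguments relative to $X$. A suffix of at most $2d-2$ letters per phase suffices, because earlier steps are $\sock$ among $X$, and the counters of processes in $X$ differ by less than $d$, so counting modulo $2d$ works.

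The cut, synchronise and expand operations are then applied phase by phase. For earlier frozen phases, the participating processes have already agreed on everything, since they belong to the same class and the phase is closed. In that case synchronisation just selects the most advanced copy. Within the active phase we use exactly the fair operations. The DFA state at the start of a new phase is computed from the last state of the previous phase and its remaining suffix. Acceptance is decided as in the fair case, by combining, via Lemma~\ref{lem:viewFoata} and $\diam$, the global state from the per-class views of the last phase.

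Counting gives, per phase, $O(n\times d\times|\Sigma|^{3d-3})$ possibilities as in Theorem~\ref{thm:fair-AA}, and at most $s+1$ phases yield the claimed bound.

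I expect the main obstacle to be the correctness of phase boundaries. Processes of the same class may detect the SCC change at different moments, and their views may straddle the boundary. A transient SCC (e.g.\ $\Gamma_2$ before $\Gamma_4$) may also be visited by one process's view but not another's. To handle this, I would define phase boundaries in terms of the first event whose causal past reaches the new SCC, and prove that any two processes of a class agree on this event once both know it. When a synchronising process has not yet seen the boundary, it adopts the phase split of its partner. Proving that the cut prefix of the old phase is then consistent is where most of the care is needed.
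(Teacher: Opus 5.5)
\textbf{There is a genuine gap.} Your architecture matches the paper's: at most $s+1$ phases per local state, each a fair-style triple with a modular counter, giving the product bound. But the part you postpone is the actual content of the proof, and the mechanism you sketch for it does not work as stated.

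Freezing a phase when $p$'s own view enters an SCC with a finer relation, with boundary ``the first event whose causal past reaches the new SCC'', is not well defined on traces, for two reasons.
\begin{enumerate}
\item There can be several incomparable such events. For example, two concurrent local events of $p$ and $p'$ may each, alone, lead into an SCC separating them from some $r$. Two processes of the same class may then freeze different old phases.
\item An event known to a partner but concurrent with $p$'s boundary can sit at a Foata level inside $p$'s frozen part. Phases then stop being contiguous blocks of Foata steps, which the counter representation $(c,q,\varphi)$ requires.
\end{enumerate}
Hence your claim that frozen phases are already agreed upon, and that synchronisation ``just selects the most advanced copy'', is unsupported. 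The same problem affects your fairness lemma, stated for ``the part of a trace read while the global state stays inside $\Gamma$'': one linearisation may visit $\Gamma_2$ and another $\Gamma_3$.

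The missing idea is the paper's cut rule, phrased along the Foata linearisation of each process's own view. A step $S_i$ is cut only if all events of $S_1\cdots S_i$ are $\sock[[p]_{q_i}]$, where $q_i$ is the state \emph{right after} $S_i$. A split is produced by $\dvdcut$ only lazily, once the new class can already cut. This ensures that whatever a partner of the new class knows beyond $p$'s view lies after the split. That is what yields invariants (D) and (E) of Definition~\ref{def:invariant}: potentially communicating processes are at most one phase apart, the lagging one has cut no more in the shared phase, and all earlier phases coincide. These invariants are exactly what justify synchronising the last phase of the lagging process against the last two phases of the leading one.

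The acceptance step has a second gap. In the fair case, combining views via Lemma~\ref{lem:viewFoata} works because all processes share the cut prefix and its state. Here, separated classes have different phase structures and, in general, different contents in their last common phase. In Example~\ref{ex:acc-intersection}, $\{p_1,p_3\}$ ends that phase in $q_4$ while $\{p_2,p_4\}$ ends it in $q_3$. Applying $\diam(q_1,q_2,q_3,X)$ requires a base state reached on a common prefix, after which the two remainders use disjoint sets of processes, and your proposal does not say how to obtain it.

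The paper builds the tree of successive separations from the states $s_{p,i}$. Then, bottom-up, it:
\begin{enumerate}
\item aligns the counters in the last common phase of the classes being merged;
\item takes the stepwise \emph{intersection} of their remaining Foata steps to recover the base state;
\item only then applies $\diam$.
\end{enumerate}
With these two ingredients supplied, your final count of at most $s+1$ phases, each of size $O(n\times d\times|\Sigma|^{3d-3})$, goes through as in the paper.
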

A lower bound on the size of an AA equivalent to $\A$ is provided in~\cite{BerMon26} in the case of a fair system, i.e.~in which the separation depth is $s=0$: this shows that the $d$ component in the exponent is unavoidable. We first provide an overview of our construction. 

\subparagraph*{Overview} As in the case of fair systems, each process
maintains its local view of the trace read so far, in Foata normal form. For
fair systems, all Foata steps which are $\sock${} for the process could be
deleted. This is no longer possible. Consider the trace of
Example~\ref{ex:trace}. Here is the view of process $p_1$ in Foata normal form
(ignore the dashed line in between for now):
\begin{align*}
\left[ \begin{tabular}{c} 12 \\ 34\end{tabular} \right] 
\left[ \begin{tabular}{c} 235 \\ \hspace{1mm} \end{tabular} \right] 
\left[ \begin{tabular}{c} 12 \\ 34\end{tabular} \right] 
\left[ \begin{tabular}{c} 235 \\ \hspace{1mm} \end{tabular} \right] 
\left[ \begin{tabular}{c} 12 \\ 34\end{tabular} \right] 
~\dvline~
\left[ \begin{tabular}{c} 13 \\ \hspace{1mm} \end{tabular} \right] 
\left[ \begin{tabular}{c} 35 \\ \hspace{1mm} \end{tabular} \right]
\left[ \begin{tabular}{c} 13 \\ \hspace{1mm} \end{tabular} \right]
\end{align*}
Notice that the events in the first four Foata steps are $\sock${} for $p_1$.  The fifth step $\{12, 34\}$ (just before the dashed line) is not $\sock${} since $p_1$ does not know whether $p_2$ is aware of event $34$. Similarly, the events $13~35~13$ are not $\sock${}. More importantly, $13~35~13$ can never be $\sock${} in any extension of the view consistent with $\A'$. This is because in the state $q_4$ reached after this trace, $p_2$ and $p_4$ have already separated from $p_1$ (see right of Figure~\ref{fig:ccp-automaton}) and hence the events where they participate would no longer appear in the view of $p_1$. Therefore, the construction cutting $\sock${} events does not work anymore. 

A natural thought is to focus simply on the potentially communicating partners of $p_1$ at the end of the view, namely $[p_1]_{q_4} = \{p_1, p_3, p_5\}$.  In
the current view, $p_1$ can deduce that all Foata steps, except for the last $\{13\}$ are present in the views of $p_3$ and $p_5$ too: we call these events to be $\sock[\{p_1, p_3, p_5\}]$ -- $\sock${} restricted to $\{p_1, p_3, p_5\}$. In any extension, $p_1$ will synchronise only with $p_3$ or $p_5$. Does this mean we can cut this entire prefix? Unfortunately, not. Even though in any extension, $p_1$ can synchronise only with $p_3$ or $p_5$, there is a final step to determine whether the global state reached on a trace is accepting or not. In the construction of Section~\ref{sec:fair-recall}, the reconciliation happens by computing the joint view of all the processes, taking one at a time: $\view_{\{p_1,p_2\}}$ is computed from $\view_{p_1}$ and $\view_{p_2}$ by cutting appropriate number of steps and then taking Foata union; then $\view_{\{p_1,p_2,p_3\}}$ is computed from $\view_{\{p_1, p_2\}}$ and $\view_{p_3}$ and so on. Therefore, in the example above, the view of $p_1$ would be synchronised with views of $p_2$ and $p_4$ as well, in the final step, to determine acceptance.

This brings us to the question: \emph{when can we safely cut an event?} Here is our proposal. Assume the view is a sequence of Foata steps $S_1 \dots S_m$, and the run of the DFA is $q_0 \xra{S_1} q_1 \xra{S_2} \cdots \xra{S_m} q_m$. We show that $S_i$ can be safely cut as long as every event in prefix $S_1 \dots S_i$ lies in $\sock[[p]_{q_i}]$, the second-order common knowledge of $p$ restricted to partners of $p$ at \emph{the next state $q_i$} (and not the state $q_m$ reached at the end of the view). For the above example, the state reached before the dashed line is $q_3$, after which we have $q_3 \xra{\{13\}} q_4 \xra{\{35\}} q_5 \xra{\{13\}} q_4$. At state $q_3$, partners of $p_1$ are all processes, i.e., $[p_1]_{q_3} = \{p_1, \dots, p_5\}$. Observe that  $34$ is not $\sock[[p_1]_{q_3}]$. Therefore, we do not cut the last $\{12, 34\}$ step. But, we can cut the following $13~35$ since all steps up to $13$ are $\sock[[p_1]_{q_4}]$ and all steps up to $35$ are $\sock[[p_1]_{q_5}]$. The idea is that when we cut a Foata step in the view of a process, it knows that all events in its view up to this step are visible to each of its communicating partners.

In this mechanism to cut steps, we may encounter cuts in the middle of the view, unlike in the fair case where the cuts happened only at a
prefix of the view. To book-keep the cuts, we mark the places where the cuts start (denoted by the dashed line).
This amounts to splitting the view into \emph{phases}. For the view above, the
first phase is the part to the left of the dashed line, and the second phase is
to its right. Within each phase, a prefix of Foata steps can be cut. In the first phase above, the prefix consisting of the first four Foata steps can be cut by applying our rule. 
For each phase, we will maintain a
triple $(c, q, \varphi)$ as in the fair construction of
Section~\ref{sec:fair-recall}. For the above view of $p_1$, the state reached
would be $(6, q_0, \{12, 34\})(2, q_5, \{13\})$. 

Once we have local states in this form, the next challenge is to be able to
compute the joint states when multiple processes synchronise. For instance,
assume a letter $a$ is being read, with $\loc(a) = \{p, p'\}$. We need to
reconcile the current views of $p$ and $p'$. Due to our cutting strategy, any new information that $p$ can learn from $p'$ would be after the last cut made in $p$. However there could be an event which is cut in one process, but not cut yet in the other. 
The key observation is that $p$ and $p'$ can
be at most one phase apart, and only the last phase of the lagging process needs
to be synchronised with the last two phases of the leading process. All these
ideas are concisely captured in Invariants (A), (B), (C) and (D) of
Definition~\ref{def:invariant}.

The final difficulty is to determine which global states are accepting. To do so, we need to be able to determine the state of $\A'$ reached at the end of the global trace, simply by looking at the above maintained abstraction of the local views. Here is where we make use of Invariant (E) (Definition~\ref{def:invariant}) and the $\diam$ function (Section~\ref{sec:prelim}). We explain the determination of final states later in the section, after formalizing the local states and transitions of the AA $\B$. 

\subparagraph*{States of $\B$} 
We start by explaining how to produce an AA with infinitely many local states, due to the counters that will be unbounded. Later, we explain how to make this part finite by using the same modulo counting technique as in~\cite{BerMon26}. The local states of a process $p$ in the (infinite) AA we are building have the shape:
$\sigma_p := (c_{p,i}, q_{p,i}, \varphi_{p,i})_{1 \leq i \leq \ell_p}$, 
where each triple is called a \emph{phase}, that contains a counter value $c_{p,i} \in \N$ to track the number of cut letters in this phase, a state $q_{p,i}$ of $\A$, and the Foata normal form $\varphi_{p,i}$ of a trace. In the fair case discussed in Section~\ref{sec:fair-recall}, only a single phase is maintained along the computation. 

The local initial state of a process $p$ is given by $(0, q_0, \varepsilon)$, with $q_0$ being the initial state of $\A$.
Along our construction, we will maintain an invariant on the global states we reach in the AA, that is trivially fulfilled for the empty trace.

\begin{definition}\label{def:invariant}
A trace $t$ \emph{satisfies the invariant} if by reading it in $\B$, we reach a global state $\big((c_{p,i}, q_{p,i}, \varphi_{p,i})_{1\leq i\leq \ell_p}\big)_{p \in P}$ s.t.
for all processes $p$, we have states $s_{p,0}, s_{p,1}, \ldots, s_{p,\ell_p}$ with $s_{p,0}$ the initial state of $\A$ and $q_{p,i} \xrightarrow{\varphi_{p,i}} s_{p,i}$ for all~$i$, 
and the following conditions hold: 
\begin{enumerate}
    \item[$\mathrm{(A)}$] for every process $p$, for all $2\leq i \leq \ell_p$, $c_{p,i} >0$: \emph{all phases, except possibly the first, have experienced a cut};
    \item[$\mathrm{(B)}$] for every process $p$, for all $1\leq i< \ell_p$, we have $[p]_{s_{p,i+1}} \subsetneq [p]_{s_{p,i}}$: \emph{in-between two phases, the potential communication relation is strictly refined};
    \item[$\mathrm{(C)}$] for every process $p$, $\Foata(\view_p(t))$ can be decomposed as $\psi_{p,1} \varphi_{p,1} \psi_{p,2} \varphi_{p,2} \cdots \psi_{p,\ell_p} \varphi_{p,\ell_p}$ with $|\psi_{p,i}| = c_{p,i}$ and $s_{p,i-1} \xrightarrow{\psi_{p,i}}q_{p,i}$ for all $1\leq i\leq \ell_p$: \emph{the counters exactly count the number of cut letters in the view of a process; in particular, the state of $\A$ reached after reading $\view_p(t)$ is equal to $s_{p,\ell_p}$};
    \item[$\mathrm{(D)}$] for each pair of processes $p, p'$ such that $p'\in [p]_{s_{p,\ell_p}}$ and $p \in [p']_{s_{p',\ell_{p'}}}$, we have $|\ell_p-\ell_{p'}|\leq 1$, and if $\ell_{p'} =\ell_{p}-1$ then $c_{p',\ell_{p'}} \leq c_{p,\ell_{p'}}$: \emph{two processes that are still potentially communicating can be at most one phase apart, and if one of the processes has one less phase, it has cut no more letters than the other process};
    \item[$\mathrm{(E)}$] for each pair of processes $p, p'$, letting $\ell_{p, p'}$ be the last phase $i$ such that $p'\in [p]_{s_{p,i}}$, and similarly for $\ell_{p',p}$, we have for all $1\leq j\leq \min(\ell_{p,p'},\ell_{p',p})-1$, 
    $(c_{p,j}, q_{p,j}, \varphi_{p,j}) = (c_{p',j}, q_{p',j}, \varphi_{p',j})$: \emph{the last time two processes were still potentially communicating, the common past phases, except for the last one, are equal}. 
\end{enumerate}
\end{definition}

\subparagraph*{Cut and split operation}

One of the more subtle operations that we have to perform along the computation is the splitting into phases in order to be able to perform more cuts. To do so, we introduce an operation $\dvdcut_{p,q}$ with $p$ a process and $q$ a state of $\A$. This operation takes the Foata normal form $\varphi$ of a trace $t$ 
as parameter, and returns a sequence of phases. Moreover, for a set of processes $X$, we say an event $e$ is \emph{second-order common knowledge among $X$} or $\sock[X]$ for a process~$p$ 
if for all processes~$p' \in X$, $\view_{p'}(\view_p(t))$ contains $e$, i.e.~$p$ knows that every other process in $X$ knows the event~$e$. 

Let $\varphi = S_1 S_2 \cdots S_m$ be the Foata steps, and let 
$q=q_0 \xra{S_1} q_1 \xra{S_2} q_2 \cdots \xra{S_m} q_m$ be the run reading $\varphi$ from $q$. Here, $\varphi$ is part of the view of process $p$. 
If for all $i$, every event in $S_1 \dots S_i$ is $\sock[[p]_{q_i}]$ for $p$, then every letter can safely be cut and we let
    $\dvdcut_{p,q}(\varphi) = (|\varphi|, q_m, \varepsilon)$.
Otherwise, let $i$ be the minimal index such that some letter of $S_1 \dots S_i$ is not $\sock[[p]_{q_i}]$ for $p$. Due to minimality of $i$, the steps $S_1\cdots S_{i-1}$ can safely be cut, but not~$S_i$. There are two further cases depending on whether we should split the last phase or not: 
    \begin{itemize}
        \item If for all $j>i$, at least a letter of $S_i, S_{i+1}, \dots, S_j$ is not $\sock[[p]_{q_j}]$ for $p$, nothing else can be cut, and we thus let 
        $\dvdcut_{p,q}(\varphi) = (|S_1\cdots S_{i-1}|, q_{i-1}, S_i\cdots S_m)$.
        \item Otherwise, there exists a minimal $j>i$ such that all events in $S_i, S_{i+1}, \dots, S_j$ are $\sock[[p]_{q_j}]$. 
        By minimality, $S_{j-1}$ cannot be cut: there must exist a process in $[p]_{q_{j-1}}$ that, according to $p$, does not know about a letter of $S_i, S_{i+1}, \dots, S_{j-1}$, and this process cannot be among $[p]_{q_j}$, thus we have $[p]_{q_j} \subsetneq [p]_{q_{j-1}}$ (which will help us get invariant (B)). We split to build a first phase 
        $(|S_1\cdots S_{i-1}|, q_{i-1}, S_i\cdots S_{j-1})$. We then continue this procedure with the steps $S_j\cdots S_m$, and we thus define inductively \[\dvdcut_{p,q}(\varphi) = (|S_1\cdots S_{i-1}|, q_{i-1}, S_i\cdots S_{j-1}) \; \dvdcut_{p,q_{j-1}}(S_j\cdots S_m)\]
    \end{itemize}

\subparagraph*{\texorpdfstring{Transitions of $\B$}{Transitions of B}}

Before reading a letter $a$, all processes in $\loc(a)$ must synchronise their local states $(\sigma_p)_{p\in \loc(a)}$. We let $\synchronise\big((\sigma_p)_{p\in \loc(a)}\big)$ be the local state obtained after synchronisation, as follows. We describe only this synchronisation between two processes $p$ and $p'$ participating in $a$. The general case follows by induction. During our explanations, we use the invariant of Definition~\ref{def:invariant} that we want to maintain.

Let $\sigma_p = (c_{i}, q_{i}, \varphi_{i})_{1\leq i\leq \ell}$ be the local state of process $p$, and $\sigma_{p'} = (c'_{i}, q'_{i}, \varphi'_{i})_{1\leq i\leq \ell'}$ be the local state of process $p'$. Without loss of generality, we suppose that $\ell'\leq \ell$ and if $\ell = \ell'$, then $c_{\ell'} \leq c_{\ell}$. Since $p$ and $p'$ are potentially communicating, by invariant~(D), $\ell'\in\{\ell,\ell-1\}$. By invariant~(E), we know that for all $1 \le i \le \ell'-1$, $(c_i, q_i, \varphi_i) = (c'_i, q'_i, \varphi'_i)$. We keep those phases unchanged during the synchronisation.

If $\ell' = \ell-1$, we make $p'$ catch up as follows. By invariant~(D), we know that $c'_{\ell'} \leq c_{\ell'}$: process $p$ has cut more than $p'$ in the $\ell'$-th phase. We thus cut the first $c_{\ell'}-c'_{\ell'}$ letters from~$\varphi'_{\ell'}$ (as in~\cite{BerMon26}, these letters are ensured to represent whole steps in the Foata normal form). The remaining Foata steps are then of the form $\varphi_{\ell'}\psi$. In the state after synchronisation, we add the phase $(c_{\ell'}, q_{\ell'}, \varphi_{\ell'})$. 
We temporarily add a phase $(0, q_{\ell'}, \psi)$ to process $p'$.

For the last phase (either $p'$ had a phase of index $\ell'=\ell$, or we have created one just above), we must synchronise the view of $p$ that is $(c_\ell, q_\ell, \varphi_\ell)$, and the view of $p'$ that is~$(c'_\ell,q'_\ell, \varphi'_\ell)$. We still have $c'_\ell \leq c_\ell$ (either it is by hypothesis, or we have $c'_\ell=0$ by the case just above). We can thus still cut $c_\ell-c'_\ell$ letters in the phase of $p'$, to obtain some Foata steps $\psi'$. Guided by Lemma~\ref{lem:viewFoata}, we then perform a componentwise union of the steps of $\varphi_\ell$ and $\psi'$ to obtain Foata steps $\psi''$. We thus obtain 
$\synchronise\big(\sigma_p, \sigma_{p'}\big) = (c_1, q_1, \varphi_1)\ldots (c_{\ell-1}, q_{\ell-1}, \varphi_{\ell-1})(c_{\ell}, q_\ell, \psi'')$.

Once all participating processes of $\loc(a)$ have synchronised, the updated last phase $(c_\ell, q_\ell, \psi'')$ can then be expanded with a new step $\{a\}$. To finish, we apply a phase split and replace the last phase $(c_{\ell}, q_\ell, \psi'')$ by the phases $\dvdcut_{p, q_\ell}(\psi''\{a\})$.
We can show that if a trace $t$ satisfies the invariant of Definition~\ref{def:invariant}, then the trace $ta$, obtained by adding a letter~$a$ also satisfies this invariant, which proves the correctness of the construction. A detailed example illustrating an entire run for each process is provided in Example~\ref{ex:CC-construction} in Appendix~\ref{app:additional-examples}.

\subparagraph*{\texorpdfstring{Determining the accepting states of $\B$}{Determining the accepting states of B}}

At this point, the invariant of Definition~\ref{def:invariant} ensures that after reading a trace $t$, each process $p$ reaches a local state from which the state reached in $\A$ on reading $\view_p(t)$ can be computed. However, we need to be able to compute the state reached in $\A$ after reading the whole trace $t$, in order to declare if the global state of the AA is accepting or not. Our construction proceeds in two steps: (1) compute a tree depicting how two processes separate at each phase, (2) a bottom-up computation over this tree computing the states reached by $\view_X(t)$ for a  subset $X$ of processes at each stage, until $X$ equals the whole set of processes, which will then give us the required state. We first exemplify our construction, before formalising it.

\begin{example}\label{ex:acc-intersection}
Consider the DFA of Figure~\ref{fig:CC-DFA-intersection} over the set of processes $\{p_1, p_2, p_3, p_4\}$ and a new alphabet following the same conventions as before (in particular, $\loc(134) = \{p_1, p_3, p_4\}$). It satisfies the diamond property, is trim and connectedly communicating. The potential communication relations are depicted on the right for each SCC.
We read the trace $12 \; 34\; 13\; 24\; 13$ that is accepted by the DFA:
\begin{center}\renewcommand{\arraystretch}{1.3}%
\arrayrulecolor{gray}%
\scriptsize%
\scalebox{0.95}{
\begin{tabular}{c | l | l | l | l | l}
    & $12$ & $34$ & $13$ & $24$ & $13$ \\
    \hline
    $p_1$ & $(0,q_0,\{\blue{12}\})$ & $(0,q_0,\{12\})$ & $(0,q_0,\{12,\blue{34}\}\{\blue{13}\})$ & $(0,q_0,\{12,34\}\{13\})$ &
    $(0,q_0,\{12,34\}\{13\})\blue{(1,q_5,\varepsilon)}$\\\hline
    $p_2$ & $(0,q_0,\{\blue{12}\})$ & $(0,q_0,\{12\})$ & $(0,q_0,\{12\})$ & $(0,q_0,\{12,\blue{34}\})\blue{(1,q_6,\varepsilon)}$ & $(0,q_0,\{12,34\})(1,q_6,\varepsilon)$ \\\hline
    $p_3$ & $(0,q_0,\varepsilon)$ & $(0,q_0,\{\blue{34}\})$ & $(0,q_0,\{\blue{12},34\}\{\blue{13}\})$ & $(0,q_0,\{12,34\}\{13\})$ & $(0,q_0,\{12,34\}\{13\})\blue{(1,q_5,\varepsilon)}$\\\hline
    $p_4$ & $(0,q_0,\varepsilon)$ & $(0,q_0,\{\blue{34}\})$ & $(0,q_0,\{34\})$ & $(0,q_0,\{\blue{12},34\})\blue{(1,q_6,\varepsilon)}$ &  $(0,q_0,\{12,34\})(1,q_6,\varepsilon)$
\end{tabular}}
\end{center}
In the final global state computed, each process has two phases. Processes $p_1$ and $p_3$ reach states $q_4$ after Phase $1$, and $q_5$ after Phase $2$. Processes $p_2$ and $p_4$ reach $q_3$ after Phase $1$ and $q_6$ after Phase $2$. From these states, we can infer the following tree:

\begin{center} 
\begin{tikzpicture}
 \node [draw] (0) at (0,0) {\scriptsize $\{p_1, p_2, p_3, p_4 \}$};
 \node [draw] (1) at (2.5, 0.4) {\scriptsize $\{p_1, p_3\}$};
 \node [draw] (2) at (2.5, -0.4) {\scriptsize $\{p_2, p_4\}$};
 \node [red] at (0, 0.5) {\scriptsize $1$};
 \node [red] at (3.3, 0.4) {\scriptsize $2$};
 \node [red] at (3.3, -0.4) {\scriptsize $2$};
 \draw [thin, gray] (0) to (1);
 \draw [thin, gray] (0) to (2);
\end{tikzpicture}
\end{center}
The tree reveals that at the end of Phase $1$, all processes are still communicating, whereas by the end of Phase $2$, the communication relation has split to $\{p_1, p_3\}$ and $\{p_2, p_4\}$. We now move on to the second step. First, we resynchronise processes that are still communicating, i.e.~$p_1$ with $p_3$, and $p_2$ with $p_4$, but nothing changes since their respective last seen letter was a synchronisation with their partner. We deduce that $\view_{\{p_1, p_3\}}(t) = q_5$ and $\view_{\{p_2, p_4\}} = q_6$. Now, we move on to the root node of the above tree, and look at the entry in Phase $1$ for all the processes.  Then, to reconcile the equivalence classes $\{p_1, p_3\}$ and $\{p_2, p_4\}$, we take an intersection of Foata normal forms in the last phase they have in common: by definition, we know that the symmetric difference in-between the Foata normal forms will be actions seen by an equivalence class, but not by the other ones and vice versa, thus independent actions. Here, the last letter $13$ of the first phase is removed by the intersection (since it is not known by $p_2$ and $p_4$), to obtain $(0,q_0,\{12,34\})(1,q_5,\varepsilon)$ for $\{p_1, p_3\}$, and $(0,q_0,\{12,34\})(1,q_6,\varepsilon)$ for $\{p_2, p_4\}$. 
What we gain after doing this computation is that all processes now share the state reached in-between the two phases, that is state $q_3$. To obtain the state of $\A$ reached at the end of the run on the trace, we compute $\diam(q_3, q_5, q_6, \{p_1, p_3\}) = q_8$. We can then declare the global state to be accepting.
\end{example}

We present some of the technical details here, and the rest in \Cref{app:proofs-ccp-construction}. For each process~$p$, let $\sigma_p = (c_{p,i}, q_{p,i}, \varphi_{p,i})_{1 \le
i \le \ell_p}$ be the local state that $p$ reaches after trace $t$. 
From the previous section,
we know that $t$ satisfies the invariant of Definition~\ref{def:invariant}: in
particular, we can use the states $s_{p,0}, s_{p,1}, \dots, s_{p, \ell_p}$ of
$\A$ that are given by this invariant, so that $s_{p,0}$ is the initial state
of $\A$, and $q_{p,i} \xra{\varphi_{p,i}} s_{p,i}$ for all~$i$.

To compute the tree of separation, we consider an equivalence relation $\equiv_j$ on processes defined as: $p \equiv_j p'$ if $p \equiv_{s_{r,j}} p'$ for all processes $r$. Intuitively, $p$ is made equivalent to~$p'$ if they are both potentially communicating at the end of Phase $j$, according to all the processes. If a process $r$ has fewer phases, we first normalise its local state by padding empty phases $(0, s_{r, \ell_r}, \varepsilon)$ so that in the end all processes have the same number $m$ of phases. We then proceed to the bottom-up computation, starting from the leaves. For each leaf $Y$, we synchronise the states of processes in $Y$ by applying $\synchronise\big((\sigma_{p'})_{p'\in Y}\big)$ and thus can compute an updated local state $\sigma_Y = (c_{Y,i},
q_{Y,i}, \varphi_{Y,i})_{1 \le i \le m}$, as well as a state $s_Y$ reached
in $\A$ after reading $\view_Y(t)$. For the bottom-up computation, pick a node labelled with processes~$X$ and phase $j$, with children over processes $Y_1, \dots, Y_n$ at phase $j+1$ such that $X = Y_1 \cup \cdots \cup Y_n$. Assume we have computed the joint states for each $Y_1, \dots, Y_n$, and in particular the states $s_{Y_1}, \dots, s_{Y_n}$. Consider Invariant (E) of Definition~\ref{def:invariant}. For each pair of processes $p, p' \in X$, we have $\ell_{p, p'} = j$. Therefore, for all processes in $X$, all the phases up to $j-1$ will be identical. Phase $j$ may however differ which we will reconcile as follows. 

Let $(c_i, q_i, \varphi_i)_{1\leq j\leq n}$ be the $j$-th phase of processes in  $Y_1, \ldots, Y_n$ respectively (each process of $Y_i$ share the same $i$-th phase by the synchronisation that we have first performed). Let $c = \max_{1\leq i\leq n} c_{i}$ be the maximal counter value. For all $i$, we cut $c-c_{i}$ letters to make a global synchronisation of the phase prefix (in particular the state part): thus, each local state now looks like $(c, q, \varphi'_{i})_{1 \le i \le n}$. We then reconcile $Y_1$ and $Y_2$ (continuing two by two, to reconcile all classes afterwards) as follows. First, we consider the pointwise intersection of Foata steps $\varphi'_{Y_1}$ and $\varphi'_{Y_2}$: what we have removed is necessarily independent of the other processes. Let $\varphi''$ be the result, and $s''=\delta(q, \varphi'')$. By Lemma~\ref{lem:diam}, the state reached on $\view_{Y_1 \cup Y_2}(t)$ is then $s_{Y_1\cup Y_2} = \diam(s'', s_{Y_1}, s_{Y_2}, Y_1)$.

\begin{figure}[tbp]
\centering
\scalebox{0.9}{
\begin{minipage}[c]{.45\textwidth}
\begin{tikzpicture}[greenstate/.style={rectangle, rounded corners, fill=green!30, inner sep=4pt, draw=gray, thick}, redstate/.style={rectangle, rounded corners, fill=red!30, inner sep=4pt, draw=gray, thick}, graystate/.style={rectangle, rounded corners, fill=gray!30, inner sep=4pt, draw=gray, thick}, baseline=(current bounding box.center)]
    \begin{scope}[every node/.style={greenstate}]
        \node (0) at (0,0) {\scriptsize $q_0$};
        \node (1) at (1.5, 0) {\scriptsize $q_1$};
        \node (2) at (0, -1.5) {\scriptsize $q_2$};
        \node (3) at (1.5, -1.5) {\scriptsize $q_3$};
        \node (4) at (3, -1.5) {\scriptsize $q_4$};
    \end{scope}
    \begin{scope}[every node/.style={redstate}]
        \node (5) at (4.5, -1.5) {\scriptsize $q_5$};
        \node (6) at (1.5, -3) {\scriptsize $q_6$};
        \node (7) at (3,-3) {\scriptsize $q_7$};
        \node[double] (8) at (4.5, -3) {\scriptsize $q_8$};
    \end{scope}
    \begin{scope}[->, >=stealth]
        \draw (-0.7, 0) to (0);
        \draw (0) to node [above] {\scriptsize $12$} (1);
        \draw (0) to node [left] {\scriptsize $34$} (2);
        \draw (1) to node [right] {\scriptsize $34$} (3);
        \draw (2) to node [below] {\scriptsize $12$} (3);
        \draw (3) to node [above right,xshift=-1mm,yshift=-1mm] {\scriptsize $134$} (0);
        \draw (3) to [bend right=20] node [below] {\scriptsize $13$} (4);
        \draw (4) to [bend right=20] node [above] {\scriptsize $124$} (3);
        \draw (3) to node [left] {\scriptsize $24$} (6);
        \draw (4) to node [right] {\scriptsize $24$} (7);
        \draw (5) to node [right] {\scriptsize $24$} (8);
        \draw (4) to node [above] {\scriptsize $13$} (5);
        \draw (6) to node [above] {\scriptsize $13$} (7);
        \draw (7) to node [above] {\scriptsize $13$} (8);
        \draw (5) to [loop above] node {\scriptsize $13$} (5);
        \draw (6) to [loop below] node {\scriptsize $24$} (6);
        \draw (7) to [loop below] node {\scriptsize $24$} (7);
        \draw (8) to [loop below] node {\scriptsize $13,24$} (8);
    \end{scope}
    
\draw[rounded corners,dotted] (-0.5,0.5) rectangle (3.5,-2.05);
\node() at (3.8,0) {$\Gamma_1$};
\draw[rounded corners,dotted] (4,-.5) rectangle (5,-2.05);
\node() at (5.25,-.95) {$\Gamma_2$};
\draw[rounded corners,dotted] (1,-2.45) rectangle (2,-4);
\node() at (.7,-3.4) {$\Gamma_3$};
\draw[rounded corners,dotted] (2.5,-2.45) rectangle (3.4,-4);
\node() at (3.61,-3.6) {$\Gamma_4$};
\draw[rounded corners,dotted] (4,-2.45) rectangle (5,-4);
\node() at (5.3,-3.4) {$\Gamma_5$};
\end{tikzpicture}
\end{minipage}
\hspace{1.3cm} 
\begin{minipage}[c]{.4\textwidth}
    \begin{align*}
       &p_1 \equiv_{\Gamma_1} p_2 \equiv_{\Gamma_1} p_3 \equiv_{\Gamma_1} p_4\\
       &p_1 \equiv_{\Gamma_2} p_3 \qquad p_2 \equiv_{\Gamma_2} p_4 \\ 
    &p_1 \equiv_{\Gamma_3} p_3 \qquad p_2 \equiv_{\Gamma_3} p_4 \\
           &p_1 \equiv_{\Gamma_4} p_3 \qquad p_2 \equiv_{\Gamma_4} p_4 \\ 
       &p_1 \equiv_{\Gamma_5} p_3 \qquad p_2 \equiv_{\Gamma_5} p_4
    \end{align*}
\end{minipage}
}
\caption{A connectedly communicating DFA to illustrate the need to perform intersections of views in the acceptance condition of the built AA}\label{fig:CC-DFA-intersection}
\end{figure}

\subparagraph*{Modulo counters}

The unbounded counters make the local state space infinite. However, we can use
the crucial ingredient of Lemma~\ref{lem:pairOfProcsAtmostK-1Apart}, already
used in the construction proposed in~\cite{BerMon26} for fair DFAs, where the
views of processes cannot be more that $d$ letters apart, if $d$ is the
fairness parameter. Thus, for a set of processes $X$ communicating in an SCC of
$\A$ (that needs to be fair for $X$, by Lemma~\ref{lem:CC-fair}), the number of
letters cut by each process in $X$ lies in a contiguous range of $d$ values
where $d$ is the fairness parameter of this SCC. Thus, modulo~$2d$, these values
lie in a cyclic interval of length $d$ and when synchronising on an action, the
processes can determine the process which has cut the largest number of letters
(even if the value is lower than that of other processes) and the number of
letters to be cut to match that process. We apply this modulo counting on an
example in Appendix~\ref{app:modulo-counters-example}.

\section{Conclusion and perspectives}

In this article, we have proposed the construction of an
AA given a regular trace language that is
connectedly communicating: it is inspired by the one on fair
languages \cite{BerMon26} where only a suffix of the view is maintained by each process, but
this approach requires splitting the local states into phases to be able to cut letters from the
views of each process even in the middle. 
As perspectives, we plan on implementing the algorithm similarly as the tool
FAAST~\cite{10.5281/zenodo.18165965} proposed for the fair case. In
this tool, the construction of the AA is on-the-fly (the states and transitions are only computed when necessary at the execution) and can even be applied on non-fair languages, then providing a fair under-approximation: an
AA is built, whose language is the set of $d$-fair traces of the
specification language. We would like to study a similar strategy for 
connectedly communicating languages. 
This enables a synthesis procedure even for specifications that do not
fulfil the connected communication property, 
\todo{Ben: I have commented out the example of server-clients here, but we can reincorporate it if you feel it is important to put it in the conclusion. I have also changed the rest of the sentence to make it more precise}
thus allowing for synthesising resilient 
distributed algorithms, even without fairness properties handwritten in the formal specification.

\newpage

\appendix

\section{Additional examples}
\label{app:additional-examples}

\begin{example} \label{eg:aa-zielonka}
    Let $\Sigma = \{a, b, c\}$ be an alphabet distributed to two processes $P_1$ and $P_2$ as follows: $\Sigma_{P_1} = \{a, c\}$ and $\Sigma_{P_2} = \{b, c\}$. So the only synchronizing action is $c$. Consider the language $L = ((a + b)(a + b))^*c$. A DFA recognizing $L$ is shown in Figure~\ref{fig:DFA'}. An AA recognizing $L$ is shown in Figure~\ref{fig:DFA'-asynchronous}. Two blue $c$ or two red $c$ transitions can synchronise together. A blue $c$ and a red $c$ cannot be taken together. The transition syntax of AA allows for such coupled transitions: formally, $\delta_c((0,0)) = (2,2)$ and $\delta_c((1,1)) = (2,2)$. 
\begin{figure}
    \centering
    \begin{tikzpicture}[node distance=3cm, initial text=,>=latex]
      \node[state, initial, initial where=above, node distance=1.5cm] (0) {$0$};
      \node[state, right of=0] (1) {$1$};
      \node[state, accepting, left of=0] (2) {$2$};

      \draw[->] (0) edge node[above] {$c$} (2)
      (0) edge[bend left=10] node[above] {$a,b$} (1)
      (1) edge[bend left=10] node[below] {$a,b$} (0);
    \end{tikzpicture}
    \caption{A DFA recognising $((a+b)(a+b))^*c$}
    \label{fig:DFA'}
  \end{figure}

  \begin{figure}
    \centering
    \begin{tikzpicture}[node distance=2.5cm, initial text=,>=latex]
      \node(P1) {$P_1$};
      \node[state, below of=P1, initial, initial where=above, node distance=1.5cm] (0) {$0$};
      \node[state, right of=0] (1) {$1$};
      \node[state, accepting, left of=0] (2) {$2$};

      \draw[->] (0) edge [blue] node[above] {$c$} (2)
      (0) edge[bend left=10] node[above] {$a$} (1)
      (1) edge[bend left=10] node[below] {$a$} (0)
      (1) edge[bend left=30, red] node[below] {$c$} (2);
      
      \node[right of=P1, node distance=7cm](P2) {$P_2$};
      \node[state, below of=P2, initial, initial where=above, node distance=1.5cm] (0') {$0$};
      \node[state, right of=0'] (1') {$1$};
      \node[state, accepting, left of=0'] (2') {$2$};

      \draw[->] (0') edge [blue] node[above] {$c$} (2')
      (0') edge[bend left=10] node[above] {$b$} (1')
      (1') edge[bend left=10] node[below] {$b$} (0')
      (1') edge[bend left=30, red] node[below] {$c$} (2');
    \end{tikzpicture}
    \caption{An AA recognizing $((a+b) (a +b ))^*c$.}    
    \label{fig:DFA'-asynchronous}
  \end{figure}
\end{example}

\begin{example}\label{ex:CC-construction}
    We now exemplify the new construction on the DFA $\A'$ of Figure~\ref{fig:ccp-automaton}. The construction is similar as in Example~\ref{ex:fair-construction} when we read letters $12$, $34$, $235$, $34$, $12$, $235$ that stay in the SCC $\Gamma_1$ where all processes potentially communicate. We continue reading the next letters in the trace of Example~\ref{ex:trace}. We start by reading $12$, $34$ (still in the SCC $\Gamma_1$), and then $13$ where we jump to the SCC $\Gamma_2$:
    \begin{center}\renewcommand{\arraystretch}{1.3}
    \arrayrulecolor{gray}
    \scriptsize
    \begin{tabular}{c  | l | l | l }
    &  $12$ & $34$ & $13$ \\
    \hline 
    $p_1$ &  $(\blue{3}, \blue{q_0}, \{12, \blue{34}\} \{\blue{235}\} \{\blue{12}\})$ & $(3, q_0, \{12, 34\} \{235\} \{12\})$ &  
    $(\blue{6}, \blue{q_0},
    \{12, \blue{34}\} \{ \blue{13} \})$ \\
    \hline
    $p_2$  & $(3, q_0, \{12, 34\} \{235\} \{\blue{12}\})$ &  $(3, q_0, \{12, 34\} \{235\} \{12\})$ & $(3, q_0, \{12, 34\} \{235\} \{12\})$ \\
    \hline
    $p_3$ &  $(3, q_0, \{12, 34\} \{235\})$ & $(3, q_0, \{12, 34\} \{235\} \{ \blue{34} \})$ &  
    $(\blue{6}, \blue{q_0},
    \{\blue{12}, 34\} \{ \blue{13} \})$ \\
    \hline
    $p_4$  & $\{12, 34\} \{235\} \{34\}$ &    $(\blue{3}, \blue{q_0}, \{\blue{12}, 34\} \{\blue{235}\} \{ \blue{34} \})$ &   $(3, q_0, \{12, 34\} \{235\} \{34\})$\\
    \hline
    $p_5$ & $(3, q_0, \{12, 34\} \{235\})$ & $(3, q_0, \{12, 34\} \{235\})$ & $(3, q_0, \{12, 34\} \{235\})$
    \end{tabular}
    \end{center}
    Processes $p_1$ and $p_3$ synchronise and expand as before, moreover cutting the two first steps, since they know every process knows them. 
    We next read letter $35$:
    \begin{center}
        \renewcommand{\arraystretch}{1.3}
        \arrayrulecolor{gray}
        \scriptsize
            \begin{tabular}{c | l }
                & $35$ \\
                \hline
                $p_1$   &  $(6, q_0, \{12, 34\}\{ 13 \})$  \\
                \hline 
                $p_2$  & $(3, q_0, \{12, 34\} \{235\} \{12\})$   \\
                \hline 
                $p_3$  & $(6, q_0, \{12, 34\} \{ 13 \} \{\blue{35}\})~\to~ (6, q_0, \{12, 34\})
                (\blue{1, q_4,} \{35\})$ \\
                \hline 
                $p_4$ & $(3, q_0, \{12, 34\} \{235\} \{34\})$   \\
                \hline 
                $p_5$  & $(\blue{6}, \blue{q_0}, \{\blue{12}, \blue{34}\} \{ \blue{13} \} \{ \blue{35} \})~\to~ (6, q_0, \{12, 34\})
                (\blue{1}, \blue{q_4}, \{35\})$
            \end{tabular}
    \end{center}
    Processes $p_3$ and $p_5$ synchronise, cutting what can be cut, and then expanding by adding $35$ in a new step. They know that before the letter $13$, they were in $\Gamma_1$, while they are in $\Gamma_2$ afterwards, with a refined potential communication relation (they know they will no longer communicate with $p_2$ and $p_4$): moreover, in $\Gamma_2$, a cut can be performed since all steps till the event labelled $13$ are $\sock[\{p_1, p_3, p_5\}]$ for them. 
    \todo{arnab: updated to mention all previous steps are $\sock$ as well.}
    We thus split the state into two phases, a first one for what is in $\Gamma_1$, and a second for what is in $\Gamma_2$: the counter of the new phase is initialised to $1$ (since we have cut one letter), and the state $q_4$ is obtained by reading the cut letter $13$ from $q_3$ (the last state reached in $\Gamma_1$, that can be computed from the state $q_0$ while reading the letters of the first phase). 
    We then read letter $24$: 
    \begin{center}
        \renewcommand{\arraystretch}{1.3}
        \arrayrulecolor{gray}
        \scriptsize
        \begin{tabular}{c |  l }
            &  $24$ \\
            \hline
            $p_1$ & $(6, q_0, \{12, 34\} \{ 13 \})$  \\
            \hline 
            $p_2$ &   $(3, q_0, \{12, 34\} \{235\} \{12, \blue{34}\} \{\blue{24}\})~\to~ (\blue{6}, \blue{q_0}, \{12,34\})
            (\blue{1, q_6, \varepsilon})$  \\
            \hline 
            $p_3$  & $(6, q_0, \{12, 34\}) 
            (1, q_4, \{35\})$\\
            \hline 
            $p_4$  & $(3, q_0, \{12, 34\} \{235\} \{12, \blue{34}\} \{\blue{24}\})~\to~ (\blue{6}, \blue{q_0}, \{12,34\})
            (\blue{1, q_6, \varepsilon})$  \\
            \hline 
            $p_5$ & $(6, q_0, \{12, 34\}) 
            (1, q_4, \{35\})$
        \end{tabular}
    \end{center}
    Processes $p_2$ and $p_4$ learn about the new SCC $\Gamma_3$, where they are now separated from every other process: they can cut the letter $24$ known by everyone they are still communicating with, and are thus also splitting into two phases. When later actions $2$ and $4$ are performed, the processes $p_2$ and $p_4$ can independently cut all of them since they are now separated from everyone. At the end of reading the whole trace of Example~\ref{ex:trace}, we reach the local states:
    \begin{center}
        \renewcommand{\arraystretch}{1.3}
        \arrayrulecolor{gray}
        \scriptsize
        \begin{tabular}{c | l}
            $p_1$ &  $(6, q_0, \{12, 34\})
            (2, q_5, \{13\})$ \\ \hline 
            $p_2$ &  $(6, q_0, \{12, 34\})
            (1, q_6, \varepsilon)$ \\ 
            \hline 
            $p_3$ & $(6, q_0, \{12, 34\})
            (2, q_5, \{13\})$ \\
            \hline 
            $p_4$ & $(6, q_0, \{12, 34\})
            (1, q_6, \varepsilon)$ \\
            \hline 
            $p_5$ & $(6, q_0, \{12, 34\})
            (1, q_4, \{35\})$
        \end{tabular}
    \end{center}
\end{example}

\begin{example}\label{ex:final-state-computation}
    In our previous example, finding if we must accept or not is non-trivial, since in the actual trace read in the DFA, we reach the accepting state $q_7$, even if no process is alone able to detect that in the AA. In this example, we first synchronise a last time processes $p_1$, $p_3$, and $p_5$ (since $p_5$ had a less recent view of the trace in the last phase). We get as new local states:  
    \begin{center}
        \renewcommand{\arraystretch}{1.3}
        \arrayrulecolor{gray}
        \scriptsize
        \begin{tabular}{c | l}
            $p_1, p_3, p_5$ &  $(6, q_0, \{12, 34\})
             (2, q_5, \{13\})$ \\ \hline 
            $p_2$ &  $(6, q_0, \{12, 34\})
            (1, q_6, \varepsilon)$\\\hline
            $p_4$ &  $(6, q_0, \{12, 34\})
            (1, q_6, \varepsilon)$
        \end{tabular}
    \end{center}
    Finally, in reverse topological order, we compute the state reached in $\A$ after having read $\view_{X}(t)$ for some subsets of processes $X$. In the example, we know that at the end of $\view_{\{p_1, p_3, p_5\}}(t)$, state $q_4$ is reached, while at the end of $\view_{\{p_2\}}(t)$ and $\view_{\{p_4\}}(t)$, state $q_6$ is reached. Moreover, at the end of the first phase, every process is aware\footnote{In the complete procedure described afterwards, separated processes may disagree on this state, which requires considering pointwise intersection of Foata steps, in order to only keep the common information, see Example~\ref{ex:acc-intersection}.} of the actual state in $\A$, that is $q_3$. Thanks to Lemma~\ref{lem:diam}, we can compute $\diam(q_3, q_4, q_6, \{p_1, p_3, p_5\})= q_7$ and declare that $q_7$ is the state of $\A$ reached after $t$. Reconciling $p_4$ does not change the state $q_7$, and we thus declare the global state of the AA accepting.
\end{example}

\section{\texorpdfstring{Computing the $\diam$ function}{Computing the diam function}}
\label{app:diam}
Assuming the existence and uniqueness of the $\diam$ function, one can compute it for each set of process $X \subseteq P$ and triple of states $(q_1, q_2, q_3)$ by finding words $u, v$ such that $\loc(u) \subseteq X$, $\loc(v) \cap X = \emptyset$, $\delta(q_1, u) = q_2$, and $\delta(q_1, v) = q_3$. If these words $u, v$ exist, then these can be found in time linear in the number of states by a simple graph search using transitions labelled by only $\Sigma_X$ letters for $u$ and only $\Sigma \setminus \Sigma_X$ letters for $v$. By the uniqueness of the $\diam$ function, we set $\diam(q_1, q_2, q_3, X) := \delta(q_1, uv)$.

\section{\texorpdfstring{Proofs of \Cref{sec:ccp-basic}}{Proofs}}
\label{app:proof-of-ccp-basic}
\PCSameInSCC*
\begin{proof}
    We show that the graph of the relation $\equiv_s$ is included in the one of $\equiv_{s'}$, and conclude by symmetry.
    Recall from \Cref{def:potential-communication-relation} that the relation $\equiv_s$ is the reflexive transitive closure of the relation $\equiv_s'$. It thus suffices to show that for all $p,p'$, if $p \equiv_s' p'$, then $p \equiv_{s'}' p'$. If $p \equiv_s' p'$, there is a word $w$ that can be read from the state $s$ where $p, p'$ are not separated. Moreover, as the state $s'$ is in the same SCC as $s$, there is a word $u$ such that $\delta(s', u) = s$. Since in the word $uw$ read from $s'$, $p$ and $p'$ are not separated, we have $p \equiv_{s'}' p'$.
\end{proof}

\todo{Changed the proof since \Cref{def:potential-communication-relation} has changed}
\CCfair*
\begin{proof}
    If $\A$ is connectedly communicating, then consider any SCC $\Gamma_i$. Let $X$ be an equivalence class of $\equiv_{\Gamma_i}$. Suppose that $\Gamma_i$ is not fair with respect to $X$, i.e.~that $\Gamma_i$ contains a cycle around a state $s\in \Gamma_i$, reading a word $v$, where some process $p\in X$ participates, but not another process $p'\in X$. In particular, $p\equiv_{\Gamma_i} p'$.

    We assume there is a path starting from $s$ labelled $w$ such that $p$ and $p'$ are not separated in $w$, i.e., $p \equiv_{\Gamma_i}' p'$ as in \Cref{def:potential-communication-relation}. Let $s'$ be the final state of this path.

    Since $\A$ is trim, there exists a word $u$ that reaches $s$ from the initial state, and a word $w'$ that reaches an accepting state from $s'$. Moreover, there exists a word $u'$ that reaches $s'$ from $s$. In particular, for every $k \geq 0$, $uu' v^k ww'$ is a witness that $\Lt(\A)$ is not $k$-connectedly communicating since $|v^k|_p \geq k$, $|v^k|_{p'} = 0$, and $p$ and $p'$ are not separated in $ww'$.

    In particular, $p'$ participates in $v$, and $p$ and $p'$ are not separated in $v$. Since the relation $\equiv_{\Gamma_i}$ is the reflexive transitive closure of $\equiv_{\Gamma_i}'$, it follows that for all $p, p' \in X$, $p \equiv_{\Gamma_i} p'$.
    
    \medskip 
    Reciprocally, suppose that for every SCC $\Gamma_i$, and every equivalence class $X\subseteq P$ of $\equiv_{\Gamma_i}$, the SCC $\Gamma_i$ is fair with respect to $X$. Consider a trace $t\in \Lt(\A)$, two processes $p, p'$ and a linearisation $uvw$ with $|v|_p \geq n$, and $|v|_{p'} =0$, with $n$ being the number of states of $\A$.  We need to show that $p$ and $p'$ are separated in $w$. 

    Since $t$ is accepted, there are states $q, q', q_f$ such that $q_0 \xrightarrow{u} q \xrightarrow{v} q' \xrightarrow{w} q_f$ with $q_0$
    initial and $q_f$ accepting.
    Since $|v|_p\geq n$, there exists a state $r$ such that the subrun $q \xrightarrow{v} q'$ visits $r$ at least twice, with $p$ participating at least once in-between. Let $v_1 v_2 v_3$ be the corresponding factorisation, such that $r\xrightarrow{v_2}r$ and $|v_2|_p > 0$.
    
    By contradiction, if $p$ and $p'$ are not separated in $w$, then by \Cref{def:potential-communication-relation}, this is a witness that $p\equiv_{\Gamma_i} p'$ for $\Gamma_i$, the SCC of state $q'$. Since there is a path from $r$ to $q'$, this implies that $p\equiv_{\Gamma_j} p'$ with $\Gamma_j$ the SCC of state $r$. But we have $r\xrightarrow{v_2}r$ with $|{v_2}|_p \geq k$, $|{v_2}|_{p'} = 0$. This shows that the SCC $\Gamma_j$ is not fair with respect to $[p]_{\Gamma_j} = [p']_{\Gamma_j}$.
\end{proof}

\section{\texorpdfstring{Proofs of \Cref{sec:ccp-construction}}{Proofs}}
\label{app:proofs-ccp-construction}


\begin{restatable}{lemma}{CutSplitPreservesInv}
    \label{lem:cut-split-preserves-invariant}
    For any trace $t$ the tuple $(\dvdcut_{p, q_0}(\view_p(t)))_{p \in P}$ satisfies the conditions of \Cref{def:invariant}.
\end{restatable}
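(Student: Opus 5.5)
We read $\dvdcut_{p,q_0}(\view_p(t))$ as applied to $\Foata(\view_p(t))$, with $\sock[X]$ for $p$ evaluated in $\view_p(t)$. We check the five conditions of \Cref{def:invariant} directly from the definition of $\dvdcut$. Fix $p$ and write $\Foata(\view_p(t)) = S_1\cdots S_m$ with run $q_0=r_0\xra{S_1} r_1\cdots\xra{S_m} r_m$. Unfolding the recursion, the output is a sequence of phases $(c_k, q_k, \varphi_k)_{1\le k\le \ell_p}$. Each phase covers a contiguous block of steps: a cut prefix $\psi_k$ of $c_k$ letters followed by a kept part $\varphi_k$. So $S_1\cdots S_m=\psi_1\varphi_1\cdots\psi_{\ell_p}\varphi_{\ell_p}$.

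Condition (C) follows by induction on the recursion, since every step belongs to exactly one block. The states are consistent along the run: $q_k$ is the state after $\psi_k$, and $s_{p,k}$ is the state after $\varphi_k$. This holds because the recursive call $\dvdcut_{p,q_{j-1}}(S_j\cdots S_m)$ starts at $r_{j-1}$, which is exactly the state reached after $S_i\cdots S_{j-1}$.

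Condition (A) holds because a recursive call is made only when a minimal $j>i$ exists with all events of $S_i,\dots,S_j$ being $\sock[[p]_{r_j}]$. The next phase therefore starts by cutting at least $S_j$, so $c_{k+1}\ge |S_j|>0$. Here we must check that the inductive call's "first index" reasoning is relative to the new start. In particular, the restarted test in the call really does cut $S_j$, because all of $S_i\cdots S_j$, hence $S_j$, is $\sock[[p]_{r_j}]$.

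Condition (B) is the observation already made in the definition: by minimality of $j$, some process of $[p]_{r_{j-1}}$ misses an event that every process of $[p]_{r_j}$ knows, so $[p]_{s_{p,k+1}}\subseteq[p]_{r_j}\subsetneq[p]_{r_{j-1}}=[p]_{s_{p,k}}$. The first inclusion uses the refinement along transitions noted before \Cref{lem:potential-communication-same-in-SCC}, together with the fact that $s_{p,k+1}$ is reached after $r_j$. Strictness for the final phase boundary needs the same argument applied at the end of the recursion.

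The main obstacle is (D) and (E), which relate two processes $p,p'$. The key tool is that cut decisions depend only on knowledge shared with partners. Suppose an event $e$ lies in a block of $p$ that $p$ cuts at an index whose state has $p'\in[p]$. Then $e$ is in $\view_{p'}(\view_p(t))\subseteq\view_{p'}(t)$. Moreover, $p'$ sees the same Foata prefix: by the ideal monotonicity of FNF and \Cref{lem:viewFoata}, the views $\view_p(t)$ and $\view_{p'}(t)$ share a common ideal $\view_{p'}(\view_p(t))$ whose steps are prefixes of both FNFs.

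I would first prove a lemma: for $j<\min(\ell_{p,p'},\ell_{p',p})$, the step indices where the $j$-th phase boundary and the $j$-th cut end occur are the same for $p$ and $p'$. The $\sock$ predicate at those indices is determined inside the common ideal, since second-order knowledge of events before a point where $p'$ is still a partner is shared. This gives (E) by induction on $j$.

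For (D), two processes that are still partners at the end of their views must agree on all boundaries except possibly the last one. A lagging process may have not yet detected the final refinement, which gives $|\ell_p-\ell_{p'}|\le1$. The lagging process's knowledge of the shared phase is contained in that of the leader, which yields $c_{p',\ell_{p'}}\le c_{p,\ell_{p'}}$.

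I expect the delicate part to be making precise that "$e$ is $\sock[X]$ for $p$" evaluated in $\view_p(t)$ versus in $\view_{p'}(t)$ coincides on the common ideal. I would try to handle this by reasoning entirely about the sub-ideal $\view_{p'}(\view_p(t))\cap\view_p(\view_{p'}(t))$.
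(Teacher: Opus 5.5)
\textbf{Assessment.} Your handling of (A), (B) and (C) is fine and agrees with the paper, which treats them as immediate from the definition of $\dvdcut$. The gap is in (D), and it cannot be closed, because the bound $|\ell_p-\ell_{p'}|\le 1$ is false in general. Your sentence that a lagging process ``may have not yet detected the final refinement'' asserts exactly what fails.

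\textbf{Counterexample.} Take processes $p_1,\dots,p_5$ and letters named by their locations as in the paper. Let $\A$ be the trim acyclic DFA accepting the four words $14\,124\,12\,12\,13\,123$, $14\,12345$, $14\,124\,12\,1234$ and $14\,124\,12\,12\,123$. It is connectedly communicating because it is acyclic. Every letter contains $p_1$, so $I=\emptyset$ and the diamond property is vacuous. Along the run $r_0\xra{14}r_1\xra{124}r_2\xra{12}r_3\xra{12}r_4\xra{13}r_5$ one gets $[p_1]_{r_0}=[p_1]_{r_1}=P$, $[p_1]_{r_2}=[p_1]_{r_3}=\{p_1,p_2,p_3,p_4\}$ and $[p_1]_{r_4}=[p_1]_{r_5}=\{p_1,p_2,p_3\}$.

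Let $t=14\,124\,12\,12\,13$. Inside $\view_{p_1}(t)=t$, process $p_3$ has seen everything, $p_2$ everything but $13$, $p_4$ only $14\,124$, and $p_5$ nothing. Hence $\dvdcut_{p_1,r_0}(\view_{p_1}(t))=(0,r_0,\{14\})(1,r_2,\{12\})(1,r_4,\{13\})$.

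On the other side, $\view_{p_2}(t)=14\,124\,12\,12$ contains no $p_3$-event, while $p_3\in[p_2]_{r_j}$ for all $j\le 4$. So $p_2$ can never cut, and it gets the single phase $(0,r_0,\{14\}\{124\}\{12\}\{12\})$, ending in $r_4$. The AA $\B$ really reaches these local states.

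Now $p_2\in[p_1]_{r_5}$ and $p_1\in[p_2]_{r_4}$, yet $\ell_{p_1}-\ell_{p_2}=2$. The paper's own proof of (D) does not cover this case either. It splits on $c_{p',\ell_{p'}}<c_{p,\ell_{p'}}$ versus the reverse strict inequality, whereas here $c_{p_1,1}=c_{p_2,1}=0$.

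\textbf{Your key tool for (E).} The same example refutes it. The predicates ``$e$ is $\sock[X]$ for $p$'' and ``$e$ is $\sock[X]$ for $p'$'' do not coincide on a shared prefix. Here $p_1$ and $p_2$ share $14\,124\,12\,12$, but only $p_1$ knows what $p_3$ has seen. Restricting to $\view_{p'}(\view_p(t))\cap\view_p(\view_{p'}(t))$ cannot repair this, since the distinguishing event $13$ lies outside that intersection.

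\textbf{What does prove (E).} Your phase-boundary lemma is true, but for a different reason: a frozen-knowledge argument that uses the semantics of $\equiv_{r_k}$, which your proposal never invokes. Suppose $A=S_1\cdots S_k$ is entirely $\sock[[p]_{r_k}]$ for $p$, and $p'\in[p]_{r_k}$.
\begin{enumerate}
\item Then $A$ is a prefix of $\Foata(\view_{p'}(t))$. Ideal monotonicity and Lemma~\ref{lem:viewFoata} alone do not give this. One must also exclude extra low-depth events in $\view_{p'}(t)$, using that processes outside $[p]_{r_k}$ are separated from $p'$ in the rest of $t$, which is a path from $r_k$.
\item The same separation gives $\view_z(\view_p(t))=\view_z(A)=\view_z(\view_{p'}(t))$ for every $z\notin[p]_{r_k}$.
\end{enumerate}
So $p$ and $p'$ agree on what separated processes know. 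The partners that remain are certified by each process's own later cut. When both processes have a further boundary with the other still a partner, this forces the same failing index and the same split index, phase by phase.

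For (D), the lagging process has no such certificate for its remaining partners. That is exactly the loophole the counterexample exploits.
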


\begin{proof}
    It is clear from the definition of $\dvdcut$ operation that conditions $\mathrm{(A)}, \mathrm{(B)}$ and $\mathrm{(C)}$ are satisfied. We need only show that conditions $\mathrm{(D)}$ and $\mathrm{(E)}$ are satisfied.

    Consider a pair of processes $p$ and $p'$ and the pair of tuples $(c_{p,i}, q_{p,i}, \varphi_{p,i})_{1\leq i\leq \ell_p}$ and $(c_{p',i}, q_{p',i}, \varphi_{p',i})_{1\leq i\leq \ell_{p'}}$ where $p'\in [p]_{s_{p,\ell_p}}$, $p \in [p']_{s_{p',\ell_{p'}}}$, and for all~$i$, $q_{p,i} \xrightarrow{\varphi_{p,i}} s_{p,i}$ and $q_{p',i} \xrightarrow{\varphi_{p',i}} s_{p',i}$. 
    
    We assume for a contradiction that $\mathrm{(D)}$ is violated for the pair of processes $p$ and $p'$ that are still communicating, i.e., $\ell_p = \ell_{p'} + 2$ (w.l.o.g.) process $p$ has cut and split two more phases than $p'$. From condition $\mathrm{(C)}$ we have $\Foata(\view_p(t)) = \psi_{p,1} \varphi_{p,1} \psi_{p,2} \varphi_{p,2} \cdots \psi_{p,\ell_p} \varphi_{p,\ell_p}$ and $\Foata(\view_{p'}(t)) = \psi_{p',1} \varphi_{p',1} \psi_{p',2} \varphi_{p',2} \cdots \psi_{p',\ell_{p'}} \varphi_{p',\ell_{p'}}$ for some $\psi_{p,i}$, $\psi_{p',i}$ with $|\psi_{p,i}| = c_{p, i}$ and $|\psi_{p', i}| = c_{p', i}$. We have two cases:
    \begin{itemize}
        \item Process $p'$ has cut fewer letters in the $\ell_{p'}^{th}$ phase than $p$, i.e., $c_{p', \ell_{p'}} < c_{p, \ell_{p'}}$. Note that the Foata steps $\psi_{p, \ell_p - 1}$ were cut for process $p$ since every step till then are $\sock[[p]_{q_{\ell_p-1}}]$ for $p$. In particular all steps of $\Foata(\view_p(t))$ till the steps $\psi_{p, \ell_p - 1}$ are a prefix of $\Foata(\view_{p'}(t))$. Consider some event $e$ in the steps preceding $\psi_{p, \ell_p - 1}$ that is not known by some process $r$ in $\view_{p'}(t)$ and hence $\psi_{p, \ell_p - 1}$ was not cut in $\Foata(\view_{p'}(t))$. This event was however in $r$'s view of $\psi_{p,1} \varphi_{p,1} \psi_{p,2} \varphi_{p,2} \cdots \psi_{p,\ell_p-1} \varphi_{p,\ell_p-1}$, i.e., there is an $r$ event $e_r$ and a $p$ event $e_p$ in $\varphi_{p,\ell_p-1}$ such that $e \le e_r \le e_p$. In particular, $e_r$ does not occur in $\varphi_{p', \ell_{p'}}$.
        
        Since events in $\psi_{p, \ell_p}$ were also cut for process $p$, by definition of $\dvdcut$ every event of $\varphi_{p, \ell_p-1}$ occurs in the view of every process in $[p]_{s_{p, \ell_p}}$, and thus in $\varphi_{p', \ell_{p'}}$. This means $e_r$ also occurs in $\varphi_{p', \ell_{p'}}$, a contradiction.
        
        \item Process $p$ has cut fewer letters than process $p'$ in the $\ell_{p'}^{th}$ phase. Then, there is an event $e$ in $\psi_{p', \ell_{p'}}$ that is not $\sock[[p]_{s_{p, \ell_{p'}}}]$ for $p$ due to some process $r$ but appears in $\varphi_{p, \ell_{p'}}$. However, there is an $r$-event $e_r$ and a $p'$-event $e_{p'}$ in $\psi_{p', \ell_{p'}}\varphi_{p', \ell_{p'}}$ such that $e \le e_r \le e_{p'}$. In the joint view of $p$ and $p'$, $\Foata(\view_{p,p'}(t))$, the event $e_{p'}$ must appear in a Foata step corresponding to a step in $\varphi_{p, \ell_p}$, for all $p'$ events preceding it are known to $p$ by definition of $\dvdcut$. However, this implies that in $p$'s last phase, $r$ and $p'$ are still communicating with each other, and therefore $p$ and $r$ are still communicating by transitivity. However then the phase $(c_{p, \ell_p-1}, q_{p, \ell_p-1}, \varphi_{p, \ell_p-1})$ shouldn't have been initiated since $r$ communicates with $p$ in phase $\ell_p-1$ but does not know about the event~$e$.
    \end{itemize}


    As a direct consequence, for every pair of processes $p$ and $p'$ that are still communicating such that $\ell_p = \ell_{p'} + 1$, the first $\ell_{p'}-1$ phases of $p$ and $p'$ are the same since the views of $p$ and $p'$ are the same till the steps $\psi_{\ell_{p'}}$. Therefore, condition $\mathrm{(E)}$ holds.
\end{proof}

\begin{restatable}{lemma}{CutSplitState}
    \label{lem:states-cut-split-view}
    The local state of each process $p$ of $\B$ after reading a trace $t$ is $\dvdcut(\view_p(t))$.
\end{restatable}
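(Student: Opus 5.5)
We prove the statement by induction on the length of $t$ (equivalently, on a linearisation of $t$). Here $\dvdcut(\view_p(t))$ means $\dvdcut_{p,q_0}(\Foata(\view_p(t)))$. For the empty trace, every view is empty, and $\dvdcut_{p,q_0}(\varepsilon)=(0,q_0,\varepsilon)$ is exactly the initial local state. For the inductive step, consider $ta$. Processes $p\notin\loc(a)$ keep both their local state and their view, so only $p\in\loc(a)$ needs checking.

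For these processes, we use the induction hypothesis together with \Cref{lem:cut-split-preserves-invariant}: the tuple of local states reached on $t$ satisfies invariants (A)--(E). This is what justifies the synchronisation steps, namely that any two processes in $\loc(a)$ are at most one phase apart, that their first $\ell'-1$ phases coincide, and that the counter differences correspond to whole Foata steps. By (C) and \Cref{lem:viewFoata}, the synchronised state
\[
(c_1,q_1,\varphi_1)\cdots(c_{\ell-1},q_{\ell-1},\varphi_{\ell-1})(c_\ell,q_\ell,\psi'')
\]
is a decomposition of $\Foata(\view_{\loc(a)}(t))$. Moreover, every phase except the last one is exactly the phase produced by $\dvdcut_{p,q_0}$ applied to this view; this holds because the leading process $p$ had already produced these phases from its own view. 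Appending $\{a\}$ then gives a decomposition of $\Foata(\view_p(ta))=\Foata(\view_{\loc(a)}(t))\{a\}$.

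The core of the argument is a locality property of $\dvdcut$: splitting decisions made on a prefix are not revised when the view grows. Concretely, we will show that the phases $1,\dots,\ell-1$ and the cut prefix of phase $\ell$ computed from $\view_p(t)$ (for the leading process) are exactly those computed by $\dvdcut_{p,q_0}$ on $\view_p(ta)$. Then $\dvdcut_{p,q_0}(\view_p(ta))$ equals those phases followed by $\dvdcut_{p,q_\ell}(\psi''\{a\})$, which is precisely the new local state. To prove this property we need two facts.

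\emph{(i) Monotonicity of knowledge.} If an event is $\sock[X]$ for $p$ in $t$, it remains $\sock[X]$ for every process in $\loc(a)$ in $ta$, since the views only grow.

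\emph{(ii) Stability of the blocking point.} The first index $i$ at which $\dvdcut$ is blocked, and each later split point $j$, are determined by states $q_i,q_j$ that depend only on the prefix. We then argue that a step which was not $\sock[[p]_{q_i}]$ in the old view cannot become cuttable from within the earlier phases. The reason is that any new knowledge about it lies in the last phase, because all earlier split points were witnessed by a strict refinement, invariant (B).

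We expect the main obstacle to be (ii) for the joint view. The leading process may cut more in phase $\ell$, or split differently, once it learns of the other process's events. This is exactly why $\dvdcut$ is only reapplied to the last phase. We must show that the new information cannot retroactively allow cutting inside phases $<\ell$ beyond their current counters. For that, we plan to use the minimality in the definition of $\dvdcut$ and the fact that phase $\ell-1$ ended because some process in $[p]_{q_{j-1}}\setminus[p]_{q_j}$ lacked an event. Since that process is no longer a partner, $p$ can never learn anything new about it, and so that block is permanent. We will also handle the catch-up case $\ell'=\ell-1$, showing that the temporary phase $(0,q_{\ell'},\psi)$ merged into phase $\ell$ produces the same cuts as computing directly on the joint view.
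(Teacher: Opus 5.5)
Your proposal is correct and follows essentially the paper's argument: induction on $t$, then invariants (C), (D), (E) to align the phases of the processes of $\loc(a)$. After that you keep the leading process's first $\ell-1$ phases, synchronise the last phase, append $\{a\}$ and re-apply $\dvdcut$ to it. The difference is that you make explicit the locality property the paper only asserts (that phases $1,\dots,\ell-1$ and the already-cut prefix of phase $\ell$ survive in $\dvdcut_{p,q_0}(\view_p(ta))$), arguing via monotonicity of $\sock$ and the permanence of blocks in closed phases, whose blocking processes lie outside $[p]_{q_j}$ and so can never again be causally linked to $p$.
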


\begin{proof}
    We prove by induction on $t$. The base case $t = \epsilon$ is trivial. We assume the lemma holds for a trace $t$ and prove that it holds for a trace $ta$ for some letter $a$, i.e., we assume the global state of $\B$ after reading $t$ is $(\dvdcut(\view_p(t))_{p \in P}) = \big((c_{p,i}, q_{p,i}, \varphi_{p,i})_{1\leq i\leq \ell_p}\big)_{p \in P}$. For $p \notin \loc(a)$, we have $\dvdcut(\view_p(ta)) = \dvdcut(\view_p(t))$ since $\view_p(ta) = \view_p(t)$. We consider the local states of processes in $\loc(a)$.
    
    
    Let $\ell = \max_{p \in \loc(a)} \ell_p$, and $p_{max}$ be a process such that $\ell_{p_{max}} = \ell$. Also, let $c = \max_{p \in \loc(a)}(c_{\ell_p})$. Note by invariant $\mathrm{E}$, the first $\ell_p - 2$ phases are the same for all $p \in \loc(a)$. By invariant $\mathrm{C}$, we let $\Foata(\view_p(t)) = \psi_{p,1} \varphi_{p,1} \psi_{p,2} \varphi_{p,2} \cdots \psi_{p,\ell_p} \varphi_{p,\ell_p}$ for some $\psi_{p,i}$ with $|\psi_{p,i}| = c_{p, i}$. Thus, for all $p, p' \in \loc(a)$, for all $1 \le i < \ell$, we have $\psi_{p, i} = \psi_{p', i}$ and $\varphi_{p, i} = \varphi_{p', i}$. Thus, after splitting the last phase of processes $p$ with $\ell_p = \ell - 1$ to match the penultimate phase of $p_{max}$, processes in $\loc(a)$ have the first $\ell-1$ phases of $\dvdcut(\view_{\loc(a)}(ta))$. Therefore, synchronising the last phase as in the fair construction, expanding by $a$, and performing the final $\dvdcut$ operation yields the final phases of $\dvdcut(\view_{\loc(a)}(ta))$.

\end{proof}

\paragraph*{Formalizing the accepting state computation}
Assume we are given a global state $\sigma = (\sigma_p)_{p \in \proc}$. Each $\sigma_p$ is of this form:
\begin{align*}
\sigma_p = (c_{p,i}, q_{p,i}, \varphi_{p,i})_{1 \le i \le \ell_p}
\end{align*}
Process $p$ has $\ell_p$ phases. As done before, let $s_{p, i}$ be the state reached by $\A$ on reading $\varphi_{p,i}$ starting from $q_{p,i}$. 

First we do a normalization so that all the processes have the same number of phases. Let $m_{\sigma} = \max_{p \in \proc} \ell_p$. If for some process $p'$, we have $\ell_{p'} < m_\sigma$, then we add $m_\sigma - \ell_{p'}$ extra phases $(0, s_{p', \ell_{p'}}, \varepsilon)$. For these extra phases, that is, for $\ell_{p'} + 1 \le j \le m_\sigma$, we also have $s_{p', j} = s_{p', \ell_{p'}}$. Therefore, we now have a global state of the form:
\begin{align*}
 \sigma_p = (c_{p,i}, q_{p,i}, \varphi_{p,i})_{1 \le i \le m_{\sigma}}
\end{align*}

Define a bunch of $m_{\sigma}$ equivalence relations on the set of processes as follows: for $1 \le j \le m_{\sigma}$, we say $p \equiv^j p'$ if for all processes $r \in \proc$,  we have $p \equiv_{s_{r, j}} p'$.

The following lemma is a consequence of the fact that for every process $r$, the equivalence relation $\equiv_{s_{r, j+1}}$ is a refinement of $\equiv_{s_{r, j}}$.

\begin{lemma}
    \label{lem:enhanced-state-equiv-refinement}
    The relation $\equiv^{j+1}$ is a refinement of $\equiv^j$ for all $1 \le j \le m_{\sigma} - 1$. 
\end{lemma}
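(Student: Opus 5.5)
The plan is to derive the statement directly from the corresponding fact for a single process, and then pass it through the universal quantification over $r$ in the definition of $\equiv^j$. Fix $1 \le j \le m_\sigma - 1$ and processes $p, p'$ with $p \equiv^{j+1} p'$. We must show $p \equiv^j p'$, i.e.\ that $p \equiv_{s_{r,j}} p'$ for every process $r$. So fix an arbitrary $r \in \proc$. By definition of $\equiv^{j+1}$ we have $p \equiv_{s_{r,j+1}} p'$, and it suffices to show that $\equiv_{s_{r,j+1}}$ is contained in $\equiv_{s_{r,j}}$.

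The key step is to establish that $s_{r,j+1}$ is reachable from $s_{r,j}$ in $\A$. We then invoke the observation made after \Cref{def:potential-communication-relation}: along a transition $s \xra{a} s'$, the relation $\equiv_{s'}$ refines $\equiv_s$. Iterating this along any path gives that $\equiv_{s''}$ refines $\equiv_s$ whenever $s''$ is reachable from $s$.

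For reachability there are two cases, according to whether phase $j+1$ is an original phase of $r$ or a padded one.
\begin{itemize}
\item If $j+1 \le \ell_r$, then invariant (C), which holds since the global state is reached on a trace satisfying the invariant, gives $s_{r,j} \xrightarrow{\psi_{r,j+1}} q_{r,j+1}$. By the definition of the states $s_{r,\cdot}$ we also have $q_{r,j+1} \xrightarrow{\varphi_{r,j+1}} s_{r,j+1}$. Hence there is a path $s_{r,j} \xrightarrow{\psi_{r,j+1}\varphi_{r,j+1}} s_{r,j+1}$.
\item If $j+1 > \ell_r$, the phase is padded. Then $s_{r,j+1} = s_{r,\ell_r}$, and also $s_{r,j} = s_{r,\ell_r}$ (this holds trivially when $j = \ell_r$). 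The empty path suffices, and the two relations coincide.
\end{itemize}

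The only point needing care is the first case. There we must make sure the path uses genuine DFA transitions, so that the transition-wise refinement argument applies. This follows because Foata normal forms are read in $\A$ via linearisations, which are well defined thanks to the diamond property.

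Refinement along that path yields $p \equiv_{s_{r,j}} p'$. Since $r$ was arbitrary, $p \equiv^j p'$ follows.
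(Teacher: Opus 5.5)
Your proof is correct and follows essentially the same route as the paper, which justifies the lemma in one line by the per-process fact that $\equiv_{s_{r,j+1}}$ refines $\equiv_{s_{r,j}}$. You supply that fact explicitly: you build the path $s_{r,j} \xrightarrow{\psi_{r,j+1}\varphi_{r,j+1}} s_{r,j+1}$ from invariant $\mathrm{(C)}$ (or take the empty path for padded phases), apply the transition-wise refinement observation along it, and then pass the result through the universal quantifier over $r$.
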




This gives the partitions of processes communicating with each other in each ``global'' phase. The division of the processes looks like this:

\begin{tikzpicture}
\node (0) at (0,0) {\scriptsize $X_0$};
\node (11) at (1, 1) {\scriptsize $X_{11}$};
\node (12) at (1, 0) {\scriptsize $X_{12}$};
\node (13) at (1, -1) {\scriptsize $X_{13}$};
\node (21) at (2, 1.3) {\scriptsize $X_{21}$};
\node (22) at (2, 0.8) {\scriptsize $X_{22}$};
\node (23) at (2, 0.2) {\scriptsize $X_{23}$};
\node (24) at (2, -0.3) {\scriptsize $X_{24}$};
\node (31) at (3, 1.3) {\scriptsize $X_{31}$};
\node (32) at (3, 0.8) {\scriptsize $X_{32}$};
\node (33) at (3, 0.3) {\scriptsize $X_{33}$};

\begin{scope}[gray]
\draw (0) to [bend left=10] (11);
\draw (0) to (12);
\draw (0) to [bend right=10] (13);
\draw (11) to [bend left=10] (21);
\draw (11) to [bend right=10] (22);
\draw (12) to [bend left=10] (23);
\draw (12) to [bend right=10] (24);
\draw (22) to [bend left=10] (31);
\draw (22) to (32);
\draw (22) to [bend right=10] (33);
\end{scope}
\end{tikzpicture}

We can now define a sequence of \emph{enhanced states} $\eta_{m_\sigma + 1}, \eta_{m_\sigma}$, $\eta_{m_\sigma - 1}, \dots, \eta_1$ which we will define inductively.

We start with $\eta_{m_\sigma + 1}$. Assume $\equiv^{m_\sigma + 1}$ is the trivial relation where each process is the only one its equivalence class. Define $\eta_{m_\sigma + 1}$ to be $(\sigma_p ~\mid~ s_{p, m_\sigma})_{p \in P}$.

Suppose we have $\eta_{j+1}$: to each equivalence class $Y$ of $\equiv^{j+1}$, we have associated:
\begin{align*}
 (c_{Y,1}, q_{Y,1}, \varphi_{Y,1}) \cdots (c_{Y, j} q_{Y, j}, \varphi_{Y, j}) ~\mid~ s_Y
\end{align*}
where $(c_{Y,1}, q_{Y,1}, \varphi_{Y,1}) \cdots (c_{Y, j}, q_{Y, j}, \varphi_{Y,j})$ equals $(c_{p,1}, q_{p,1}, \varphi_{p,1}) \cdots (c_{p,j}, q_{p,j}, \varphi_{p,j})$ for some $p \in Y$. No matter which $p\in Y$ we pick, this part will be the same. 

Consider an equivalence class $X$ of $\equiv^j$. By \Cref{lem:enhanced-state-equiv-refinement}, there is a collection of equivalence classes $\mathcal{Y} = \{Y_i\}_{1 \le i \le n}$ of $\equiv^{j+1}$ such that $X = \bigcup \mathcal{Y}$.

By invariant $\mathrm{(E)}$ of \Cref{def:invariant}, we have that for all $Y, Y' \in \mathcal{Y}$ and $0 \le \ell \le j - 1$, $(c_{Y, \ell}, q_{Y, \ell}, \varphi_{Y, \ell}) = (c_{Y', \ell}, q_{Y', \ell}, \varphi_{Y', \ell})$ since processes in $X$ communicate with each other in phase $j$. We synchronise the enhanced states of $\{Y_i\}_{1 \le i \le n}$ as follows.
\begin{enumerate}
    \item Let $c_j := \max_{y \in \mathcal{Y}} c_{Y, j}$. For each $1 \le i \le n$ cut $c_{Y, j} - c_{Y_i, j}$ letters of $\varphi_{Y_i, j-1}$ to get the phase $(c_{Y, j}. q_{Y, j}, \varphi'_{Y_i, j})$.
    \item \label{enum:final-state-for-loop-init} Let $\sigma_Z = (c_{X, \ell}, q_{X, \ell}, \varphi_{X, \ell})_{1 \le \ell \le j} := \sigma_{Y_1}$ and $s_Z := s_{Y_1}$.
    \item \label{enum:final-state-for-loop} For $i := 2$ to $n$,
    \begin{enumerate}
        \item \label{enum:final-state-for-loop-i} Let $\varphi_{ZY_i}$ be the Foata step-wise intersection of $\varphi_{Z, j}$ and $\varphi'_{Y_i, j}$, and $q_{ZY_i} = \delta(q_{Y,j}, \varphi_{XY_i})$ be the state reached on reading the intersection. The remainder of the views of $Z$ and $Y_i$ starting from $q_{ZY_i}$ are independent. 
        \item \label{enum:final-state-for-loop-ii} Let $s_Z := \diam(q_{ZY_i}, s_Z, s_{Y_i}, \cup_{1 \le \ell < i} Y_\ell)$, and $\varphi_{Z, j}$ be the Foata step-wise union of $\varphi_{Z, j}$ and $\varphi'_{Y_i, j}$.
    \end{enumerate}
    \item Finally define $\eta_{j, X} := (c_{Z, \ell}, q_{Z, \ell}, \varphi_{Z, \ell})_{1 \le \ell < j} ~\mid~ s_Z$
\end{enumerate}

\begin{lemma}
    \label{lem:final-state-computation}
    If $t$ is the trace read by $\B$, then for each $1 \le j \le m_{\sigma}+1$ and equivalence class $X$ of $\equiv^j$, $s_X$ is the state reached by $\A$ on reading $\view_X(t)$ where $(c_{X, i}, q_{X, i}, \varphi_{X, i})_{1 \le i < j} \mid s_X$ is the enhanced state of $X$ in $\eta_j$.
\end{lemma}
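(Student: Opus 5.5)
We prove the statement by downward induction on $j$, from $j = m_\sigma+1$ down to $j=1$, simultaneously establishing two facts for every equivalence class $X$ of $\equiv^j$:
\begin{enumerate}
  \item $s_X$ is the state of $\A$ reached on $\view_X(t)$;
  \item the phases $(c_{X,i},q_{X,i},\varphi_{X,i})_{1\le i<j}$ coincide with the first $j-1$ phases of every $p\in X$.
\end{enumerate}

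\textbf{Base case $j=m_\sigma+1$.} Every class is a singleton $\{p\}$, and $s_p = s_{p,m_\sigma}$. After padding, invariant (C) of \Cref{def:invariant} (available through \Cref{lem:states-cut-split-view} and \Cref{lem:cut-split-preserves-invariant}) states that $s_{p,\ell_p}$ is the state reached after reading $\Foata(\view_p(t))$. The padded phases are empty, so $s_{p,m_\sigma}=s_{p,\ell_p}$ and the base case holds.

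\textbf{Inductive step.} Fix a class $X=Y_1\cup\dots\cup Y_n$ of $\equiv^j$. For $p,p'\in X$ every process $r$ still considers them communicating at the end of phase $j$, so $\ell_{p,p'},\ell_{p',p}\ge j$. Invariant (E) then makes the phases $1,\dots,j-1$ agree on all of $X$, which gives fact 2. Only phase $j$ needs reconciling.

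By (C), each $\view_{Y_i}(t)$ has Foata form $\Psi\,\psi_{Y_i,j}\varphi_{Y_i,j}\,\rho_i$, where:
\begin{itemize}
  \item $\Psi$ is the common prefix ending at $q_{X,j}$ before the phase-$j$ cut;
  \item $\rho_i$ is the later-phase material of $Y_i$.
\end{itemize}
Cutting up to the maximal counter is legitimate: letters cut by the leading class are, by construction of $\dvdcut$, $\sock[[p]_{s_{p,j}}]$, hence present as whole Foata steps in every class of $X$. This is the same argument as in the fair synchronisation and in (D). After the cut, all classes share $\delta(q_0,\cdot)=q$ up to a common point.

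Next, take the stepwise intersection $\varphi''$ of $\varphi'_Z$ and $\varphi'_{Y_i}$. The key claim is that $\varphi''$ is the Foata form of a trace $\tau$ such that:
\begin{itemize}
  \item $\view_Z$ and $\view_{Y_i}$ are, after the common prefix, of the form $\tau t_2$ and $\tau t_3$;
  \item $\loc(t_2)\subseteq Z$ and $\loc(t_3)\subseteq P\setminus Z$.
\end{itemize}
To prove it, I take $\tau$ to be the ideal $\view_Z(t)\cap\view_{Y_i}(t)$ (an intersection of ideals is an ideal). Its Foata form is the stepwise intersection, by the same step-indexing argument as \Cref{lem:viewFoata}, since ideals preserve the step positions. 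Any event of $\view_Z\setminus\view_{Y_i}$ involving a process $r\notin Z$ would be seen by $Z$ and followed by a $Z$-event, hence known to some $p\in Z$. But $r$ and $Z$ separated before the end of phase $j$, by definition of $\equiv^{j+1}$ and the separation semantics, so no such event exists; the symmetric case for $t_3$ is identical.

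With the claim in hand, \Cref{lem:diam} gives $\delta(q_0,\view_{Z\cup Y_i}(t))=\diam(q_{ZY_i},s_Z,s_{Y_i},Z)$, which is exactly the update of $s_Z$. The stepwise union then correctly represents phase $j$ of $Z\cup Y_i$, by \Cref{lem:viewFoata}. Iterating over $i$ yields fact 1 for $X$.

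\textbf{Main obstacle.} The hard part is the independence claim. One has to show that events occurring after the intersection point, on either side, only involve processes of their own group. This needs a careful link between the DFA-level relation $\equiv_{s_{r,j}}$, which is computed from states reached in the \emph{views} of different processes $r$, and actual trace-level separation in $t$. I would first prove a small auxiliary statement: if $p\not\equiv^{j+1}p'$, then no event after the phase-$j$ cut point of one lies in the causal past of the other. This follows from the definition of $\equiv'_s$ via potential communication, together with Lemma~\ref{lem:potential-communication-same-in-SCC}.
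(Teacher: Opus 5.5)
\textbf{Verdict: genuine gap.} Your skeleton is the paper's: downward induction on $j$ with the base case from (C), an inner induction along the merging loop, agreement of phases $1,\dots,j-1$ via (E), then a stepwise intersection and a call to $\diam$.

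The gap is in the step you single out. You claim $\loc(t_2)\subseteq Z$, arguing that a process $r\notin Z$ occurring in $\view_Z(t)\setminus\view_{Y_i}(t)$ would be ``separated from $Z$ before the end of phase $j$''. This does not hold, for two reasons:
\begin{itemize}
\item $r$ may belong to $X$ itself, in some $Y_k$ with $k>i$. It is then, by definition, still potentially communicating with $Z$ at the end of phase $j$.
\item Even for $r\notin X$, non-equivalence at the state closing a phase only constrains the future. It says nothing about joint events inside the phase.
\end{itemize}

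The inclusion is in fact false. Take three processes with $\loc(a)=\loc(b)=\{p_1,p_2,p_3\}$, $\loc(e)=\{p_1,p_3\}$, $\loc(y)=\{p_1\}$. Take the DFA $q_0\xrightarrow{a}q_0$, $q_0\xrightarrow{e}q_1$, $q_1\xrightarrow{b}q_2$, $q_1\xrightarrow{y}q_2$, with $q_2$ an accepting sink. On $t=a\,e\,y$ the local states are:
\begin{itemize}
\item $(1,q_0,\{e\})(1,q_2,\varepsilon)$ for $p_1$;
\item $(1,q_0,\{e\})$ for $p_3$;
\item $(1,q_0,\varepsilon)$ for $p_2$.
\end{itemize}
All three processes are $\equiv^1$-equivalent, while $\equiv^2$ is discrete. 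Merging $Y_1=\{p_1\}$ and then $Y_2=\{p_2\}$ gives $\tau=a$ and $t_2=e\,y$. So $\loc(t_2)=\{p_1,p_3\}\not\subseteq Z$, and Lemma~\ref{lem:diam} cannot be applied with parameter $Z$.

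Your ``symmetric case'' is also not symmetric. The inclusion $\loc(t_3)\subseteq P\setminus Z$ holds for free: every event involving some $p\in Z$ lies below $\max_p(t)$, hence in $\view_Z(t)$.

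What does hold is $\loc(t_2)\cap\loc(t_3)=\emptyset$, and this is all the paper's proof asserts (``the remainders are independent''). It follows in one line from ideal-closure, with no appeal to separation. Two events sharing a process are comparable, and the smaller one then lies in both views, hence in $\tau$.

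To invoke $\diam$, you must then pass a set that contains $\loc(t_2)$ and is disjoint from $\loc(t_3)$. The set $Z=\bigcup_{\ell<i}Y_\ell$ qualifies only with an extra argument, for instance a suitable merging order. In the example above, merging $p_1$ with $p_3$ before $p_2$ works. The paper's one-line step is silent on this point as well.

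There is a second unproved step. You identify the stepwise intersection $\varphi''$ of the stored phase-$j$ blocks with the Foata form of $\tau$ beyond the common prefix. This presupposes that no event of the later-phase material $\rho$ of one side lies in the other view, which also needs an argument.
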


\begin{proof}
    We prove by a bottom-up induction on the enhanced phases. The base case of the $m_{\sigma}+1$ phase follows from invariant $\mathrm{(C)}$ of \Cref{def:invariant}. We assume the invariant holds for each equivalence class $Y$ of $\equiv^{j+1}$. Consider some equivalence class $X$ of $\equiv^j$ and set of equivalence classes $\mathcal{Y} = \{Y_i\}_{1 \le y \le n}$ of $\equiv^{j+1}$ such that $X = \bigcup \mathcal{Y}$. Further let the enhanced state of each $Y_i \in \mathcal{Y}$ be $(c_h, q_h, \varphi_h)_{1 \le h \le j} \mid s_i$.
    
    We prove by induction on the for loop of line \ref{enum:final-state-for-loop} that at the end of iteration $i$, $s_Z$ is the state reached in $\A$ on reading $\view_{Y_1, \dots Y_i}(t)$. The base case when $Z = \{Y_1\}$ (line \ref{enum:final-state-for-loop-init}) follows by the earlier inductive hypothesis. We assume the hypothesis holds at the end of iteration $i$. Since the views of processes $Z$ and $Y_{i+1}$ are the same up to the Foata steps preceding $\varphi_{Z, j}$ and $\varphi_{Y_{i+1}, j}$ (by invariants $\mathrm{(C)}$ and $\mathrm{(E)}$ of \Cref{def:invariant}), $q_{ZY_{i+1}}$ is the state of $\A$ reached on reading the intersection of $\view_{Z}(t)$ and $\view_{Y_{i+1}}(t)$. Since, the remainder of the views of $Z$ and $Y_{i+1}$ starting from $q_{ZY_i}$ are independent of each other, the $\diam$ function of line \ref{enum:final-state-for-loop-ii} is well-defined, and gives the state reached on reading $\view_{Z \cup \{Y_{i+1}\}}(t)$.
\end{proof}

\section{The example of the main body with modulo counters}
\label{app:modulo-counters-example}
We reconsider the DFA of Figure~\ref{fig:ccp-automaton}, but reading a longer trace than in Example~\ref{ex:CC-construction} to demonstrate what happens when we use modulo counters, in the finite AA that we are constructing. We first read the five same first letters:

{
\renewcommand{\arraystretch}{1.3}
\arrayrulecolor{gray}
\scriptsize 
\[\begin{array}{c|l|l|l|l|l} 
    &  12 & 34 & 235 & 34 & 12 \\
    \hline
    p_1 & (0, q_0, \{\textcolor{blue}{12}\}) & (0, q_0, \{12\}) & (0, q_0, \{12\}) & (0, q_0, \{12\}) &  (0, q_0, \{12, \blue{34}\} \{\blue{235}\} \{\blue{12}\})\\
    \hline 
    p_2 & (0, q_0, \{\textcolor{blue}{12}\}) & (0, q_0, \{12\}) & (0, q_0, \{ 12, \textcolor{blue}{34}\}\{\textcolor{blue}{235}\}) & (0, q_0, \{12, 34\} \{235\}) & (0, q_0, \{12, 34\} \{235\} \{\blue{12}\}) \\
    \hline 
    p_3 & (0, q_0, \varepsilon) & (0, q_0, \{\textcolor{blue}{34}\}) & (0, q_0, \{\textcolor{blue}{12}, 34\}\{\textcolor{blue}{235}\}) & (0, q_0, \{12, 34\} \{235\} \{\blue{34}\}) & (0, q_0, \{12, 34\} \{235\} \{34\})\\
    \hline
    p_4 & (0, q_0, \varepsilon) & (0, q_0, \{\textcolor{blue}{34}\}) & (0, q_0, \{34\}) & (0, q_0, \{\blue{12}, 34\} \{\blue{235}\} \{\blue{34}\}) & (0, q_0, \{12, 34\} \{235\} \{34\}) \\
    \hline 
    p_5 & (0, q_0, \varepsilon) & (0, q_0, \varepsilon) & (0, q_0, \{\textcolor{blue}{12,34}\} \{\textcolor{blue}{235}\}) & (0, q_0, \{12,34\} \{235\}) & (0, q_0, \{12, 34\} \{235\})
\end{array}\]
}

We continue as before by reading $235\; 12\; 34$ where we start cutting steps, and thus incrementing the counters: 

{
\renewcommand{\arraystretch}{1.3}
\arrayrulecolor{gray}
\scriptsize 
\[\begin{array}{c|l|l|l} 
    & 235 & 12 & 34 \\\hline
    p_1 & (0, q_0, \{12, 34\} \{235\} \{12\}) &
(\blue{3}, \blue{q_0},  \{12, \blue{34}\} \{\blue{235}\} \{\blue{12}\}) & (3, q_0, \{12, 34\} \{235\} \{12\}) \\
    \hline
    p_2 & (\blue{3}, \blue{q_0}, \{12, 34\} \{235\}) & 
 (3, q_0, \{12, 34\} \{235\} \{\blue{12}\}) &  (3, q_0, \{12, 34\} \{235\} \{12\})  \\
    \hline
    p_3 & (\blue{3}, \blue{q_0}, \{12, 34\} \{235\}) & 
  (3, q_0, \{12, 34\} \{235\}) & (3, q_0, \{12, 34\} \{235\} \{ \blue{34} \}) 
  \\
    \hline
    p_4 & (0, q_0, \{12, 34\} \{235\} \{34\}) & 
    (0, q_0, \{12, 34\} \{235\} \{34\}) & (\blue{3}, \blue{q_0}, \{\blue{12}, 34\} \{\blue{235}\} \{ \blue{34} \}) \\
    \hline
    p_5 & (\blue{3}, \blue{q_0}, \{12, 34\} \{235\}) & 
  (3, q_0, \{12, 34\} \{235\}) & (3, q_0, \{12, 34\} \{235\})
\end{array}\]
}

We continue by reading new letters: $235\; 34\; 12\; 235$. On the last one, the counter should be increased to $9$, but since we count modulo $2d=8$ (since the SCC $\Gamma_1$ is fair with a fairness parameter $d=4$), the counter becomes $1$:

{
\renewcommand{\arraystretch}{1.3}
\arrayrulecolor{gray}
\scriptsize 
\[\begin{array}{c|l|l|l|l} 
    & 235 & 34 & 12 & 235\\\hline
    p_1 & (3, q_0, \{12, 34\} \{235\} \{12\}) & (3, q_0, \{12, 34\} \{235\} \{12\}) & (\blue{6}, \blue{q_0}, \{12, 34\} \{235\}\{\blue{12}\}) & (6, q_0, \{12, 34\} \{235\}\{12\}) \\\hline
    p_2 & (\blue{6}, \blue{q_0}, \{12, \blue{34}\} \{\blue{235}\}) & (6, q_0, \{12, 34\} \{235\}) & (6, q_0, \{12, 34\} \{235\}\{\blue{12}\})& (\blue{1}, \blue{q_0}, \{12, \blue{34}\}\{\blue{235}\})\\\hline
    p_3 & (\blue{6}, \blue{q_0}, \{\blue{12}, 34\} \{\blue{235}\})& (6, q_0, \{12, 34\} \{235\} \{\blue{34}\}) & (6, q_0, \{12, 34\} \{235\} \{34\})& (\blue{1}, \blue{q_0}, \{\blue{12}, 34\}\{\blue{235}\})\\\hline
    p_4 & (3, q_0, \{12, 34\} \{235\} \{34\}) & (\blue{6}, \blue{q_0}, \{12, 34\} \{235\} \{\blue{34}\}) & (6, q_0, \{12, 34\} \{235\} \{34\}) & (6, q_0, \{12, 34\} \{235\} \{34\})\\\hline
    p_5 & (\blue{6},\blue{q_0}, \{\blue{12}, \blue{34}\} \{\blue{235}\}) & (6, q_0, \{12, 34\} \{235\}) & (6, q_0, \{12, 34\} \{235\}) & (\blue{1}, \blue{q_0}, \{\blue{12}, \blue{34}\}\{\blue{235}\})
\end{array}\]
}

We now read letters $12\; 34$; The synchronisation between $p_1$ and $p_2$ (and similarly for $p_3$ and $p_4$) is between counter values $6$ and $1$: since counter values of potentially communicating processes cannot be more than $d=4$ apart (by Lemma~\ref{lem:pairOfProcsAtmostK-1Apart}), the counter $1$ is necessary ahead of $6$ even if $1<6$. Processes $p_2$ and $p_4$ thus know that they are late and can cut the corresponding number of letters to catch up. 

{
\renewcommand{\arraystretch}{1.3}
\arrayrulecolor{gray}
\scriptsize 
\[\begin{array}{c|l|l} 
        & 12 & 34 \\\hline
    p_1 & (\blue{1}, \blue{q_0}, \{12, \blue{34}\}\{\blue{235}\}\{\blue{12}\}) 
    & (1, q_0, \{12, 34\}\{235\}\{12\})\\\hline
    p_2 & (1, q_0, \{12, 34\}\{235\}\{\blue{12}\}) & (1, q_0, \{12, 34\}\{235\}\{12\}) \\\hline
    p_3 & (1, q_0, \{12, 34\}\{235\}) & (1, q_0, \{12, 34\}\{235\}\{\blue{34}\})\\\hline
    p_4 & (6, q_0, \{12, 34\} \{235\} \{34\}) & (\blue{1}, \blue{q_0}, \{\blue{12}, 34\}\{\blue{235}\}\{\blue{34}\})\\\hline
    p_5 & (1, q_0, \{12, 34\}\{235\}) & (1, q_0, \{12, 34\}\{235\})
\end{array}\]
}

We continue by reading letters that exit the SCC $\Gamma_1$, that are $24\; 13\; 35$. The SCC $\Gamma_2$ is fair with a delay constant $2$, and we thus count modulo $2\times 2-1 = 3$, while the SCC $\Gamma_3$ is fair with a delay constant $1$ for each of the participating processes, and we thus count modulo $2\times 1-1=1$ which means that the counter value will always stay at $0$ (as expected since participating processes $p_2$ and $p_4$ are now separated from all other processes). When reading $24$, we already cut in $\Gamma_3$, and thus create a new phase right away. For the SCC $\Gamma_2$, we need to wait for the letter $35$ to be able to cut the letter $13$ and thus create a new phase.

{
\renewcommand{\arraystretch}{1.3}
\arrayrulecolor{gray}
\scriptsize 
\[\begin{array}{c|l|l|l} 
        & 24 & 13 & 35 \\\hline
    p_1 & (1, q_0, \{12, 34\}\{235\}\{12\}) & (1, q_0, \{12, 34\}\{235\}\{12,\blue{34}\}\{\blue{13}\}) & (1, q_0, \{12, 34\}\{235\}\{12,34\}\{13\})\\\hline
    p_2 & (1, q_0, \{12, 34\}\{235\}\{12,\blue{34}\})(\blue{0},\blue{q_6},\blue{\varepsilon}) & (1, q_0, \{12, 34\}\{235\}\{12,34\})(0,q_6,\varepsilon)  & (1, q_0, \{12, 34\}\{235\}\{12,34\})(0,q_6,\varepsilon) 
    \\\hline
    p_3 & (1, q_0, \{12, 34\}\{235\}\{34\})  &  (1, q_0, \{12, 34\}\{235\}\{\blue{12},34\}\{\blue{13}\})& (1, q_0, \{12, 34\}\{235\}\{12,34\})(\blue{1},\blue{q_4},\{\blue{35}\}) \\\hline
    p_4 & (1, q_0, \{12, 34\}\{235\}\{\blue{12},34\})(\blue{0},\blue{q_6},\blue{\varepsilon})& (1, q_0, \{12, 34\}\{235\}\{12,34\})(0,q_6,\varepsilon) & (1, q_0, \{12, 34\}\{235\}\{12,34\})(0,q_6,\varepsilon) \\\hline
    p_5 & (1, q_0, \{12, 34\}\{235\})& (1, q_0, \{12, 34\}\{235\})& (1, q_0, \{12, 34\}\{235\}\{\blue{12},\blue{34}\})(\blue{1},\blue{q_4},\{\blue{35}\})
\end{array}\]
}

On the next letter $13$ that we read, process $p_1$ learns about the phase cutting and that he needs to cut step $\{13\}$: after expansion, the step $\{35\}$ is known by the three processes $p_1, p_3, p_5$ and is thus cut. When reading the last letter $13$, process $p_1$ has counter value $2$ and process $p_3$ has counter value $0$: since processes are at most $2-1=1$ counter value apart, they know that $p_3$ is ahead and process $p_1$ can thus catch up.

{
\renewcommand{\arraystretch}{1.3}
\arrayrulecolor{gray}
\scriptsize 
\[\begin{array}{c|l|l|l} 
        & 13 & 35 & 13\\\hline
    p_1 &  (1, q_0, \{12, 34\}\{235\}\{12,34\})(\blue{2},\blue{q_5},\{\blue{13}\}) &  (1, q_0, \{12, 34\}\{235\}\{12,34\})(2,q_5,\{13\}) & (1, q_0, \{12, 34\}\{235\}\{12,34\})(\blue{1},\blue{q_5},\{\blue{13}\})\\\hline
    p_2 & (1, q_0, \{12, 34\}\{235\}\{12,34\})(0,q_6,\varepsilon) & (1, q_0, \{12, 34\}\{235\}\{12,34\})(0,q_6,\varepsilon) & (1, q_0, \{12, 34\}\{235\}\{12,34\})(0,q_6,\varepsilon) 
    \\\hline
    p_3 & (1, q_0, \{12, 34\}\{235\}\{12,34\})(\blue{2},\blue{q_5},\{\blue{13}\}) & (1, q_0, \{12, 34\}\{235\}\{12,34\})(\blue{0},\blue{q_4},\{\blue{35}\}) & (1, q_0, \{12, 34\}\{235\}\{12,34\})(\blue{1},\blue{q_5},\{\blue{13}\}) \\\hline
    p_4 & (1, q_0, \{12, 34\}\{235\}\{12,34\})(0,q_6,\varepsilon) & (1, q_0, \{12, 34\}\{235\}\{12,34\})(0,q_6,\varepsilon) & (1, q_0, \{12, 34\}\{235\}\{12,34\})(0,q_6,\varepsilon) 
    \\\hline
    p_5 & (1, q_0, \{12, 34\}\{235\}\{12,34\})(1,q_4,\{35\}) & (1, q_0, \{12, 34\}\{235\}\{12,34\})(\blue{0},\blue{q_4},\{\blue{35}\}) & (1, q_0, \{12, 34\}\{235\}\{12,34\})(0,q_4,\{35\})
\end{array}\]
}

When letter $2$ and $4$ are read, nothing is changing in the local states of processes $p_2$ and $p_4$ respectively, since every letter can now be cut with a counter value staying $0$.

In the global state that we reach, processes $p_1, p_3, p_5$ that are still potentially communicating, synchronise for a last time and get the global state:
\begin{center}\renewcommand{\arraystretch}{1.3}
        \arrayrulecolor{gray}
        \scriptsize
        \begin{tabular}{c|l}
    $p_1, p_3, p_5$ & $(1, q_0, \{12, 34\}\{235\}\{12,34\})(1,q_5,\{13\}) $\\\hline
    $p_2$ & $(1, q_0, \{12, 34\}\{235\}\{12,34\})(0,q_6,\varepsilon)$ \\\hline
    $p_4$ & $(1, q_0, \{12, 34\}\{235\}\{12,34\})(0,q_6,\varepsilon)$
\end{tabular}
\end{center}
To reconcile $p_1$ (and $p_3, p_5$) with $p_2$, we use the common knowledge of the state reached at the end of phase $1$, that is $q_3$, and the knowledge that $p_1$ ended in $q_4$ after the last phase, while $p_2$ ended in $q_6$: since $\diam(q_3, q_4, q_6, \{p_1, p_2\}) = q_7$, and the fact that reconciling with $p_4$ does not change anything, we declare the global state accepting.


\end{document}